\newcommand{\TT}{{\mathbb T}}
\newcommand{\HH}{{\mathbb H}}
\newcommand{\ZZ}{{\mathbb Z}}
\newcommand{\KK}{{\mathbb K}}
\newcommand{\VV}{{\mathbb V}}
\newcommand{\bv}{{\boldsymbol v}}
\theoremstyle{plain}
\newtheorem{Th}{Theorem}[section]
\newtheorem{Cor}[Th]{Corollary}
\newtheorem{Prop}[Th]{Proposition}
\theoremstyle{definition}
\theoremstyle{remark}
\newtheorem{Rem}[Th]{Remark}
\numberwithin{equation}{section}
\begin{document}
\title[Integrable linear problems on the triangular and honeycomb lattices]
{Integrable lattices and their sublattices II.\\
From the B-quadrilateral lattice to the self-adjoint schemes\\ on the triangular 
and the honeycomb lattices
}
\author{A. Doliwa, M. Nieszporski, and P. M. Santini}
\address{A. Doliwa: Uniwersytet Warmi\'{n}sko-Mazurski w Olsztynie,
Wydzia{\l} Matematyki i Informatyki,
ul.~\.{Z}o{\l}nierska 14 A, 10-561 Olsztyn, Poland}
\email{doliwa@matman.uwm.edu.pl}
\address{M. Nieszporski: Katedra Metod Matematycznych Fizyki,
Uniwersytet Warszawski
ul. Ho\.za 74, 00-682 Warszawa, Poland, and
School of Mathematics, University of Leeds, LS2 9JT Leeds, UK}
\email{maciejun@fuw.edu.pl}
\address{P. M. Santini: Dipartimento di Fisica, Universit\`a di Roma
``La Sapienza'' and
Istituto Nazionale di Fisica Nucleare, Sezione di Roma,
Piazz.le Aldo Moro 2, I--00185 Roma, Italy}
\email{paolo.santini@roma1.infn.it}
\date{}
\keywords{integrable discrete systems; triangular lattice; honeycomb lattice;
Laplace transformations; Darboux transformations}
\subjclass[2000]{37K10, 37K35, 37K60, 39A70}

\begin{abstract}
An integrable self-adjoint 7-point scheme on the triangular
lattice and an integrable self-adjoint scheme on the honeycomb lattice are studied
using the sublattice approach. The star-triangle relation between these systems
is introduced, and the Darboux transformations
for both linear problems from the Moutard transformation of the 
B-(Moutard) quadrilateral lattice are obtained. A geometric
interpretation of the Laplace transformations of the self-adjoint 7-point
scheme is given and the corresponding novel integrable discrete 3D system is
constructed.
\end{abstract}
\maketitle

\section{Introduction}
\subsection{Integrable linear systems on triangular and honeycomb grids}
In this paper we study integrable linear problems on planar graphs built of
regular polygons (of the same type) different from squares: the self-adjoint 
linear problem on the star of the regular triangular lattice,
and the self-adjoint linear problem on the honeycomb lattice. 
We call a linear problem integrable, if it possesses a certain number of
relevant mathematical properties, including: the existence of i) discrete
symmetries (Laplace and Darboux type transformations), ii) continuous
symmetries (nonlinear evolutions), iii) dressing procedures
($\bar\partial$-dressing, finite gap constructions) enabling one to construct
large classes of analytic solutions.

To understand the 
integrability of systems on the triangular and honeycomb lattices 
is a challenging subject. Exactly solvable models of statistical mechanics on 
such
grids are rather well studied \cite{Baxter}, but there are not so many papers
discussing this problem from the  point of view of
integrable difference equations (see, however,
\cite{Nijhoff,NovDyn,Adl,AdlSu,AdlVes,BobHofSu,BobHof,BobAg,Schiff,NSD}). 

The self-adjoint linear problem on the triangular lattice 
was introduced into the theory of integrability 
in \cite{Nov,NovDyn} as a discrete analog of the two dimensional 
elliptic Schr\"{o}dinger operator. This discretization was distinguished, among
other operators having the same continuous limit, by the existence of a
decomposition into a sum of a multiplication operator and a factorizable one. 
Such decomposition, or "extended factorization", leads to transformations,
called in the literature Laplace transformations \cite{Nov,NovDyn}. The
transformations preserve the form of the operator but does not
involve any parameters.
A more general transformation of the Darboux type (with functional parameters)
for the self-adjoint linear
problem on the triangular lattice was 
introduced and studied in \cite{NSD}. 

Integrable circle patterns on the plane with the honeycomb combinatorics were
studied in \cite{BobHofSu,BobHof,BobAg} not only from the point of view of
integrable systems, but also as potentially important objects of
the discrete holomorphic function theory. In \cite{DynNov} special Laplace
transformation operators
on the triangular lattice were studied, within that context, as analogs of the
$\partial$ and $\bar\partial$ operators.
In \cite{BobHofSu} circle packings
exhibiting the so called multi-ratio property were investigated, while
the patterns with constant intersection
angles were studied in \cite{BobHof}. A special
reduction of the packing with constant angles leading to the discrete
Painlev\'{e} and Riccati equations was the subject of \cite{BobAg}.

\subsection{Main ideas and results}

The theory of integrable difference (or discrete) equations on the $\ZZ^N$ grids
is much more developed. 
It turns out that many
integrable $\ZZ^N$ systems can be obtained by reductions of the 
multidimensional quadrilateral lattice \cite{MQL,DS-EMP,DS-sym,DMS} 
(multidimensional lattices with all elementary quadrilaterals planar), called
also discrete conjugate net~\cite{Sauer}. One of the
goals of 
this paper is to show that the integrable systems on
triangular and honeycomb grids can be incorporated into the multidimensional
quadrilateral lattice theory.

By imposing an additional linear constraint on the 
quadrilateral lattice, one obtains the so called B-quadrilateral lattice 
\cite{BQL},
which is an important object in the paper. On the algebraic
level, such a lattice is described by the linear system of discrete
Moutard equations \cite{DJM-dBKP,NimmoSchief}, the compatibility condition 
of which gives the 
nonlinear Miwa equations  \cite{Miwa}, called also the discrete BKP equations
(discrete Kadomtsev--Petviashvili equations of type B). Recently, 
integrability properties of the two-dimensional version of this lattice were
used to explain integrability of the self-adjoint 5-point scheme on the star 
of the $\ZZ^2$ lattice \cite{DGNS}. The approach presented there was based on
the idea of deriving from integrable equations on a lattice (and from their
integrability features) novel integrable equations on a sublattice of the
original lattice, together with their integrability features.  
In particular, i) the Darboux transformation for the self-adjoint 
5-point scheme
given in \cite{NSD} was rederived in \cite{DGNS} from the Moutard transformation
\cite{NimmoSchief} of the discrete Moutard equation, ii) the finite-gap theory
for the self-adjoint 5-point scheme was obtained as direct consequence of the
finite gap theory (constructed also in \cite{DGNS}) for the discrete Moutard
equations in $\ZZ^N$.

\begin{figure}
\begin{center}
\includegraphics[width=6cm]{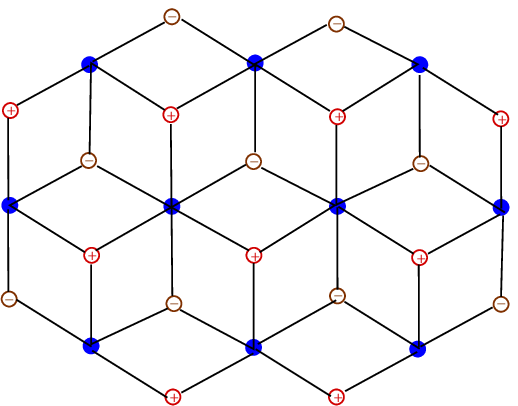} \hskip 1cm
\includegraphics[width=6cm]{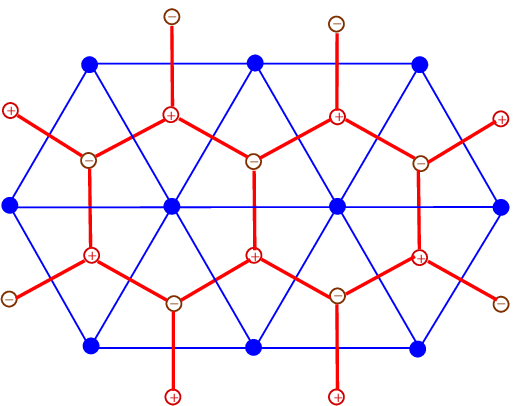}
\end{center}
\caption{The diagonal staircase section as the bipartite quasiregular rhombic
tiling (on the left), and the corresponding triangular and honeycomb lattices
(on the right)}
\label{fig:T-H-lattices}
\end{figure}  
This is also the main idea behind the present paper. We derive the integrability
properties of the self-adjoint linear problems on the triangular and honeycomb
lattices from the corresponding properties of the three dimensional
B-quadrilateral lattice. In particular, we obtain in this way the Darboux-type
transformations for both linear problems, and we show that the Laplace
transformations of \cite{Nov} are nothing but a relation between two subsequent
triangular sublattices of the B-quadrilateral lattice.
We remark that the 
approach we use here is very close, in spirit, to the one used in 
\cite{BobHof,BobAg}, 
where properties of the hexagonal
circle patterns with constant intersection
angles  were derived from properties of an 
integrable system on the $\ZZ^3$ lattice, and to that used in \cite{BobMerSur},
where discrete holomorphic functions on quad-graphs were constructed from
discrete holomorphic functions on $\ZZ^N$. 
Our results can be 
also interpreted from the
point of view of the discrete
Cauchy--Riemann (discrete Moutard) equation on quad-graphs \cite{BobSur1,Mercat}, 
with the graph being
the quasiregular rhombic tiling, as visualized on Figure \ref{fig:T-H-lattices}. 
However, we would like to stress the multidimensional 
origin of our constructions and their geometric meaning.

The layout of the paper is as follows.
In Section \ref{sec:T-H-lattices}, after construction of the triangular and
honeycomb grids as sublattices of the diagonal staircase section of the $\ZZ^3$
lattice (see Figure \ref{fig:T-H-lattices}),
we derive the self-adjoint 7-point scheme and the honeycomb linear
problem from the discrete Moutard system. The
sublattice approach provides the clear geometric meaning to the
star-triangle relation between both linear
problems. In Section \ref{sec:Laplace} we present
the corresponding interpretation of the Laplace transformation of the
self-adjoint 7-point scheme \cite{Nov} as
the transition between subsequent triangular sublattices of the B-quadrilateral
lattice. We introduce a 3D fully discrete novel integrable system describing such
a transition. Finally, in Section \ref{sec:DT} 
we obtain the Darboux transformations of the self-adjoint 7-point scheme 
\cite{NSD} and the honeycomb linear problem
from the Moutard transformation.
In the rest of this introductory section we recall necessary informations on the
discrete BKP equation, its linear problem (the discrete Moutard system), and
the corresponding Moutard transformation.

\subsection{The discrete Moutard system (the B-quadrilateral lattice)}
Consider the system of discrete Moutard equations (the discrete BKP linear
system)~\cite{DJM-dBKP,NimmoSchief}
\begin{equation} \label{eq:Moutard-ij}
\psi_{ij} - \psi = f^{ij} \times (\psi_{i} - \psi_{j}), \qquad
1\leq i < j \leq N,
\end{equation}
where $\psi:\ZZ^N\to\VV$, and $\VV$ is a linear space over a field 
$\mathbb{K}$, and $f^{ij}:\ZZ^N\to\KK$, $1\leq i < j \leq N$, are some 
functions; 
here and in all the paper
by a subscript we denote the shift in the corresponding discrete variable, i.e.
$f_i(n_1, \dots , n_i, \dots, n_N) = f(n_1, \dots , n_i+1, \dots, n_N)$.

\begin{Rem}
In \cite{BQL} it was shown that the discrete Moutard system 
\eqref{eq:Moutard-ij} characterizes algebraically (up to a gauge transformation)
special quadrilateral lattices 
$x = [\psi]:\ZZ^N \to \mathbb{P}(\VV)$ in the projectivization of
 $\VV$, which satisfy the following additional local 
linear constraint:
any point $x$ of the lattice and its neighbours
$x_{ij}$, $x_{ik}$ and $x_{jk}$, for $i,j,k$ distinct, are coplanar. 
\end{Rem}
The coefficients $f^{ij}$ are not arbitrary functions.
Compatibility of the Moutard system
leads to the following set of nonlinear equations \cite{NimmoSchief}
\begin{equation} \label{eq:nonl-BQL-f}
1 + f^{jk}_{i}(f^{ij} - f^{ik}) = f^{ik}_{j} f^{ij}= f^{ij}_{k}f^{ik}, 
\qquad i,j,k \quad \text{distinct},
\end{equation}
where, formally, we put $f^{ji}=-f^{ij}$.

If $\theta:\ZZ^N\to\KK$ is a scalar
solution of the system \eqref{eq:Moutard-ij} then the solution
$\tilde\psi:\ZZ^N\to\VV$ of the Moutard transformation 
equations \cite{NimmoSchief}
\begin{equation} \label{eq:Mt}
\tilde\psi_{i} - \psi = \frac{\theta}{\theta_{i}}(\tilde\psi - \psi_{i}),
\qquad 1\leq i\leq N,
\end{equation}
satisfies a new discrete Moutard system with the fields
\begin{equation}
\tilde{f}^{ij} = f^{ij}\frac{\theta_{i}\theta_{j}}{\theta_{ij}\theta}. 
\end{equation} 
We will need also the fact that $\theta^{-1}$ satisfies the same Moutard system
as the tranformed wave function $\tilde\psi$.
\begin{Rem} \label{rem:B-ft}
The discrete Moutard transformation provides the algebraic part of the
B-reduction of the fundamental transformation of the quadrilateral lattice
\cite{BQL}. This transformation acts within the class of B-quadrilateral
lattices and, apart from the usual property of the fundamental transformation
stating that the corresponding
points $x$, $x_{i}$, $\tilde{x}$, $\tilde{x}_{i}$ of the two
quadrilateral lattices are coplanar, we have the following
additional linear constraint:
the points $x$, $x_{ij}$,
$\tilde{x}_{i}$ and $\tilde{x}_{j}$, for $i \ne j$, are coplanar as well. 
\end{Rem}

\section{Linear problems on the triangular and honeycomb lattices}
\label{sec:T-H-lattices}

The goal of this section is to present the derivation of the discrete Laplace
equations on the triangular and honeycomb lattices from the system of Moutard
equations on $\ZZ^3$. This is done in full detail to prepare the setting 
for the new results in 
Sections \ref{sec:Laplace} and \ref{sec:DT}. The derivation of
the Laplace equations from the discrete Moutard (Cauchy--Riemann) equations on
planar bipartite quad-graphs is well known \cite{Mercat,BobSur1}. The $\ZZ^3$
origin of some integrable systems on planar graphs was used in
\cite{BobHof,BobAg}, 
where properties of the hexagonal
circle patterns with constant intersection
angles  were derived from properties of an 
integrable system on the $\ZZ^3$ lattice, and in \cite{BobMerSur},
where discrete holomorphic functions on quad-graphs were constructed from
discrete holomorphic functions on $\ZZ^N$.   

\subsection{The staircase section}
Define the 
bipartition of the $\ZZ^3$ lattice depending if the sum of the
coordinates is even (black points) or odd (white points). Consider the subset 
\begin{equation*}
V(\TT) = \{ m (1,1,0) + n (0,1,1): \; m,n\in\ZZ\} = 
\{ (n_1,n_2,n_2)\in\ZZ^3: \; n_1-n_2+n_3 = 0 \},
\end{equation*} 
of the black point lattice, which can be considered as the
intersection of the $\ZZ^3$ lattice with the diagonal plane.
Intersecting, instead, the $\ZZ^3$ lattice with two "nearests" 
parallel planes 
we obtain 
the subsets
\begin{align*}
V(\TT_+) & = \{ (1,0,0) + m (1,1,0) + n (0,1,1): \; m,n\in\ZZ\} =
\{ (n_1,n_2,n_2)\in\ZZ^3: \; n_1-n_2+n_3 - 1 =0\},\\
V(\TT_-) & = \{ (0,1,0) + m (1,1,0) + n (0,1,1): \; m,n\in\ZZ\} =
\{ (n_1,n_2,n_2)\in\ZZ^3: \; n_1-n_2+n_3 + 1 =0 \},
\end{align*}
of the white point lattice. We call the union of the above sets, 
together with the
corresponding edges and facets of the $\ZZ^3$ lattice, the 
diagonal staircase section of $\ZZ^3$ (see Figure \ref{fig:T-H-lattices}).
As a planar graph the diagonal staircase section is often refered to as
quasiregular rhombic tiling (the dual of
the kagome lattice).

Points of $V(\TT)$ are vertices of the triangular grid $\TT$ whose set of
edges $E(\TT)$ consists of segments connecting the nearest neighbours 
\begin{equation*}
E(\TT) = \{ [\boldsymbol{v}_1,\boldsymbol{v}_2]: \; 
\boldsymbol{v}_1,\boldsymbol{v}_2 \in V(\TT), \; 
|\boldsymbol{v}_1 - \boldsymbol{v}_2 | = \sqrt{2} \}.
\end{equation*}
Similarly we define the triangular grids $\TT_\pm$, which will be used in Section
\ref{sec:Laplace}. However we will need first the honeycomb grid $\HH$ whose set
of vertices is $V(\HH) = V(\TT_+)\cup V(\TT_-)$, and edges of which
are the segments connecting points of $V(\TT_\pm)$ with their
nearest $V(\TT_\mp)$ neighbours, see Figure \ref{fig:T-H-lattices}
\begin{equation*}
E(\HH) = \{ [\boldsymbol{v}_+,\boldsymbol{v}_-]: \; 
\boldsymbol{v}_+\in V(\TT_+), \boldsymbol{v}_- \in V(\TT_-), \; 
|\boldsymbol{v}_+ - \boldsymbol{v}_- | = \sqrt{2} \}.
\end{equation*}
We keep the bipartition of the honeycomb lattice into $\pm$-points.
The honeycomb tiling is the dual of the triangular tiling (in the
standard crystallographic sense). Roughly speaking, 
this means that the centers of the facets of one
lattice are the vertices of the second one (see Figure~\ref{fig:T-H-lattices}). 

\subsection{The self-adjont 7 point scheme and the sublattice approach}
\label{sec:7-point}
The goal of this Section is to derive, using the sublattice approach, 
the self-adjoint 7-point scheme from the system of the Moutard equations on 
$\ZZ^3$ grid (the three dimensional B-quadrilateral lattice). However, we first
recall a way (see, for example \cite{BobMerSur}) in which one can construct
the Laplace equation for an arbitrary graph $\mathcal{G}$ with given weight 
functions $\nu:E(\mathcal{G})\to\KK$ from non oriented edges to the field $\KK$. 
When $\psi:V(\mathcal{G})\to\VV$ is a
function on vertices of the graph, then it satisfies
the corresponding Laplace equation if at each vertex $\bv_0$
\begin{equation} \label{eq:gen-Laplace}
\sum_{\bv\sim\bv_0} \nu(\bv,\bv_0) (\psi(\bv) -\psi(\bv_0)) =0,
\end{equation}
where the summation is over vertices adjacent to $\bv_0$.

In the case of the triangular grid there are
three families of edges, and therefore function $\nu$ gives rise to three
functions, say
$a,b$ and $s$, given on triangular lattice; see Figure \ref{fig:T-lattice}, 
where also the corresponding
convention of labeling directions $I$ and $I\!I$ of the lattice is given.
\begin{figure}
\begin{center}
\includegraphics[width=5cm]{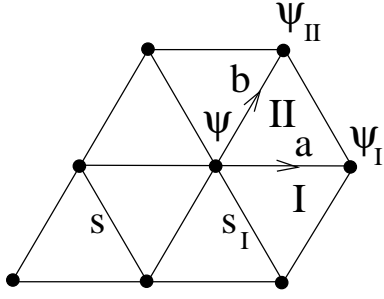}
\end{center}
\caption{The triangular lattice, its main directions and the fields $a,b,s$ on
its edges}
\label{fig:T-lattice}
\end{figure}  
The Laplace equation on such weighted graph reads
\begin{equation} \label{eq:s-a-a-7-p} 
a(\psi_{I} - \psi) + a_{-I}(\psi_{-I} - \psi) +
b(\psi_{II} - \psi) + b_{-I\!I}(\psi_{-I\!I} - \psi) +
s_{I}(\psi_{I,-I\!I} - \psi) + s_{II}(\psi_{-I,I\!I} - \psi) =0,
\end{equation}
where $\psi:V(\TT)\to\VV$ is a field defined in the vertices of the
grid with values in a vector space $\VV$. It is easy to see that
given any solution $\psi:\ZZ^3\to\VV$
of the the Moutard linear system \eqref{eq:Moutard-ij}, then its restriction 
to the vertices $V(\TT)$ of the triangular 
lattice satisfies the self-adjoint affine 7-point scheme \eqref{eq:s-a-a-7-p}
with the edge fields defined as
\begin{equation} \label{eq:subl-notation}
a = \frac{1}{f^{12}}, \qquad b = \frac{1}{f^{23}}, \qquad s = -
f^{13}_{-1-2-3}, 
\end{equation}
and the shifts on the triangular lattice given by 
\begin{equation} \label{eq:TL-shifts}
(I) = (12), \qquad \text{and} \qquad
(II) = (23).
\end{equation}

Equation \eqref{eq:s-a-a-7-p} is the affine form of the general self-adjoint
scheme on the triangular grid
\begin{equation} \label{eq:gen-7p-sa}
A\Psi_{I} + A_{-I}\Psi_{-I} +
B\Psi_{II} + B_{-I\!I}\Psi_{-I\!I} +
S_{I}\Psi_{I,-I\!I}+ S_{II}\Psi_{-I,I\!I} = F \Psi,
\end{equation}
considered in \cite{Nov}. In this paper we confine ourselves to consideration of
the affine scheme \eqref{eq:s-a-a-7-p} (without loss of generality because, 
as we will
see below, every scheme \eqref{eq:gen-7p-sa} can be transformed to an affine 
form \eqref{eq:s-a-a-7-p} by means of gauge transformation).

\begin{Cor} \label{cor:gen-aff}
Given a scalar solution $\sigma$ of the generic self-adjoint 7-point scheme
\eqref{eq:gen-7p-sa},
then $\psi = \Psi/\sigma$ satisfies the affine self-adjoint 7-point scheme
\eqref{eq:s-a-a-7-p} with the coefficients
\begin{equation} \label{eq:coeff-gen-aff}
a = A\sigma_{I}\sigma, \qquad b = B\sigma_{I\!I}\sigma, \qquad
s = S \sigma_{-I}\sigma_{-I\!I}.
\end{equation}
Conversely, given a generic function $\rho:V(\TT)\to\KK$ on the triangular 
grid and given a solution $\psi:V(\TT)\to\VV$ 
of the self-adjoint 7-point scheme \eqref{eq:s-a-a-7-p}, then the function 
$\Psi=\psi/\rho$ satisfies the scheme \eqref{eq:gen-7p-sa}
with the coefficients
\begin{equation} 
A = a\rho_{I}\rho, \qquad B = b\rho_{I\!I}\rho, \qquad
S = s \rho_{-I}\rho_{-I\!I}, \qquad 
F = (a + a_{-I} + b + b_{-I\!I} + s_{I,-I\!I} + s_{-I,I\!I})\rho^2.
\end{equation}
\end{Cor}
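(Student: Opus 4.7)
The statement asserts two mutually inverse gauge transformations between the general and affine forms of the self-adjoint 7-point scheme, so the plan is to verify both directions by direct substitution, with essentially no technical content beyond bookkeeping of shifts.

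For the forward direction, I would substitute $\Psi = \sigma\psi$ into the general scheme \eqref{eq:gen-7p-sa}, using $\Psi_I = \sigma_I\psi_I$ and analogous identities for the other five neighbors, and then multiply the resulting equation by $\sigma$. The coefficient of $\psi_I$ becomes $A\sigma_I\sigma$, which is identified as $a$ via \eqref{eq:coeff-gen-aff}; similar identifications produce $a_{-I}$, $b$, $b_{-I\!I}$, $s_I$, and $s_{II}$ in the remaining five positions. The right-hand side is $F\sigma^2\psi$, and the crucial input is the hypothesis that $\sigma$ itself solves \eqref{eq:gen-7p-sa}: multiplying that scalar equation by $\sigma$ expresses $F\sigma^2$ as precisely the sum of the six edge coefficients $a + a_{-I} + b + b_{-I\!I} + s_I + s_{II}$. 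Substituting back collapses the equation into the affine form \eqref{eq:s-a-a-7-p}, as required.

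For the converse, no hypothesis on $\rho$ is needed. I would substitute $\psi = \rho\Psi$ into the affine scheme \eqref{eq:s-a-a-7-p}, expand each difference $\psi_I - \psi = \rho_I\Psi_I - \rho\Psi$ (and likewise for the other five neighbors), and multiply through by $\rho$. Reading off the coefficients of $\Psi_I$, $\Psi_{-I}$, $\Psi_{\pm I\!I}$ and the two diagonal neighbors yields the claimed values of $A$, $B$, $S$, while collecting the terms on the opposite side gives exactly the stated formula for $F$.

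The only real obstacle is notational: the triangular directions $(I) = (12)$ and $(I\!I) = (23)$ from \eqref{eq:TL-shifts} are not independent $\ZZ^3$ axes, and the $s$-term involves the diagonal shift $(I,-I\!I)$. One must therefore verify that the defining formulas \eqref{eq:coeff-gen-aff} are self-consistent under shifts; for instance, that $a_{-I}$ viewed as a shifted edge field agrees with $A_{-I}\sigma\sigma_{-I}$, and that $s_I$ emerging from the substitution equals the shift by $I$ of $s = S\sigma_{-I}\sigma_{-I\!I}$, using the elementary identities $\sigma_{-I}|_I = \sigma$ and $\sigma_{-I\!I}|_I = \sigma_{I,-I\!I}$. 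Once these shift identities are confirmed, both directions reduce to linear bookkeeping.
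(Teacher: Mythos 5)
Your proposal is correct and coincides with the direct gauge-substitution verification that the paper leaves implicit (the corollary is stated there without proof). One caveat: carrying out your converse computation actually yields $F=(a+a_{-I}+b+b_{-I\!I}+s_{I}+s_{I\!I})\rho^{2}$, consistent with the form of \eqref{eq:s-a-a-7-p}, whereas the printed statement has $s_{I,-I\!I}+s_{-I,I\!I}$ in place of $s_{I}+s_{I\!I}$ --- an apparent typo in the paper rather than a gap in your argument, so you should not claim to recover the stated $F$ "exactly" without noting this discrepancy.
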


\begin{Rem}
Since the Moutard equation on the $\ZZ^N$ grid can be integrated by the finite
gap method \cite{DGNS}, the above sublattice derivation of the 7-point scheme
from the discrete Moutard equations imply that also the 7-point scheme on the
triangular lattice can be integrated using the theory developed in
\cite{DGNS,BQL}.
\end{Rem}
\begin{Rem} \label{rem:Paolo-gen}
Notice that the above derivation of the self-adjoint affine 7-point scheme can
be generalized to arbitrary dimension as follows. Consider a solution 
$\psi:\ZZ^N\to\VV$
of the the $N$-dimensional 
Moutard linear system \eqref{eq:Moutard-ij} on the $2N$ elementary
quadrilaterals
\begin{equation*}
(1,2), \dots , (j,j+1), \dots , (N-1,N), (N,-1), (-1,-2), \dots , (-N,1).
\end{equation*}
Add the first $N$ schemes and substract the remaining ones (dividing by the
Moutard coefficients, when necessary). The result is the following self-adjoint
$(2N+1)$-point scheme (on a star with $2N$ legs) on the black points:
\begin{equation}
\sum_{j=1}^{N-1} \left( \frac{\psi_{j,j+1} - \psi}{f^{j,j+1}} + 
\frac{\psi_{-j,-j-1} - \psi}{f^{j,j+1}_{-j,-j-1}} \right) =
f^{1,N}_{-1}( \psi_{-1,N} - \psi) + f^{1,N}_{-N}( \psi_{1,-N} - \psi).
\end{equation}
\end{Rem}

\subsection{The linear problem on the honeycomb lattice}
\label{sec:honeycomb}
The discrete Moutard equations
\eqref{eq:Moutard-ij}  imply a linear relation between a point of the $\TT_-$
lattice with its three nearest $\TT_+$ neighbours: 
\begin{equation}
f^{12}(\psi_{1} - \psi_{2}) + f^{23}(\psi_{3} - \psi_{2}) -
\frac{1}{f^{13}_{2}}(\psi_{123} - \psi_{2})  = 0, 
\end{equation}
and a 
linear relation between a point of the $\TT_+$
lattice with its three nearest $\TT_-$ neighbours: 
\begin{equation}
f^{12}(\psi_{2} - \psi_{1}) + f^{23}_{1-3}(\psi_{12-3} - \psi_{1}) -
\frac{1}{f^{13}_{-3}}(\psi_{-3} - \psi_{1})  = 0. 
\end{equation}
These two relations, which give the geometric characterization of the
B-quadrilateral lattice \cite{BQL}, can be interpreted as the
following linear self-adjoint system of equations on the honeycomb white lattice
(see Figure \ref{fig:H-lattice})
\begin{figure}
\begin{center}
\includegraphics[width=6cm]{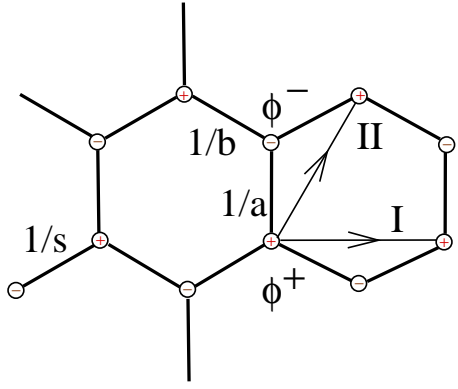}
\end{center}
\caption{The honeycomb lattice}
\label{fig:H-lattice}
\end{figure}
\begin{align} \label{eq:hcl1}
\frac{1}{a}(\phi^+ - \phi^-) + \frac{1}{b}(\phi^+_{-I,I\!I} - \phi^-) +
\frac{1}{s_{I,I\!I}}(\phi^+_{I\!I} - \phi^-) & = 0,\\
\label{eq:hcl2}
\frac{1}{a}(\phi^- - \phi^+) + \frac{1}{b_{I,-I\!I}}(\phi^-_{I,-I\!I} - \phi^+) +
\frac{1}{s_{I}}(\phi^-_{-I\!I} - \phi^+) & = 0.
\end{align}
for the fields $\phi^+$ and $\phi^-$, defined by
\begin{equation} \label{eq:phi-pm}
\phi^+=\psi_{1}, \qquad \phi^-=\psi_{2}, 
\end{equation}
having used also \eqref{eq:subl-notation} and the triangular lattice shifts
\eqref{eq:TL-shifts}.

Again, like in the triangular lattice, one can consider a more 
general linear system
on the honeycomb lattice than the above affine one
\eqref{eq:hcl1}-\eqref{eq:hcl2}.
The result below is the analogue of Corollary~\ref{cor:gen-aff} for the 
honeycomb lattice.
\begin{Cor} \label{cor:gen-aff-hc}
Given a solution $(\sigma^+,\sigma^-)$ of the generic honeycomb system
\begin{align} \label{eq:hcl1-g}
\frac{1}{A}\Phi^+ + \frac{1}{B}\Phi^+_{-I, I\!I} +
\frac{1}{S}_{I, II}\Phi^+_{I\!I} & = F^+ \Phi^-,\\
\label{eq:hcl2-g}
\frac{1}{A}\Phi^- + \frac{1}{B_{I,-I\!I}}\Phi^-_{I, -I\!I} +
\frac{1}{S}_{I}\Phi^-_{-I\!I} & = F^- \Phi^+,
\end{align}
then $(\phi^+,\phi^-)$, where $\phi^\pm = \Phi^\pm/\sigma^\pm$, 
satisfies the affine honeycomb linear system \eqref{eq:hcl1}-\eqref{eq:hcl2}
with the coefficients
\begin{equation} \label{eq:coeff-gen-aff-hc}
a = \frac{A}{\sigma^+\sigma^-}, \qquad 
b = \frac{B}{\sigma^+_{-I, I\!I}\sigma^-}, \qquad
s = \frac{S}{\sigma^+_{-I}\sigma^-_{-I,-I\!I}}.
\end{equation}
Conversely, given generic function $(\rho^+, \rho^-):V(\HH)\to\KK$ on 
the honeycomb 
grid, and given a solution $(\phi^+,\phi^-):V(\HH)\to\VV$ 
of the linear problem \eqref{eq:hcl1}-\eqref{eq:hcl2}, then the function 
$(\Phi^+,\Phi^-)=(\phi^+/\rho^+,\phi^-/\rho^-)$ satisfies the system 
\eqref{eq:hcl1-g}-\eqref{eq:hcl2-g}
\eqref{eq:gen-7p-sa}
with the coefficients
\begin{gather*} 
A = \frac{a}{\rho^+\rho^-}, \qquad 
B = \frac{b}{\rho^+_{-I, I\!I}\rho^-}, \qquad
S = \frac{s}{\rho^+_{-I}\rho^-_{-I,-I\!I}}, \\
F^+ = (\rho^-)^2 \left( \frac{1}{a} + \frac{1}{b} + 
\frac{1}{s_{I,I\!I}}\right), \qquad
F^- = (\rho^+)^2 \left( \frac{1}{a} + \frac{1}{b_{I,-I\!I}} + 
\frac{1}{s_{I}}\right).
\end{gather*}
\end{Cor}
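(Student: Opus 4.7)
The plan is to mimic the calculation underlying Corollary~\ref{cor:gen-aff}: the key observation is that the affine form of the honeycomb scheme is what one obtains from the generic form by eliminating the inhomogeneous data $F^\pm$ via an already-known scalar solution. Concretely, the forward direction will be treated by writing equation~\eqref{eq:hcl1-g} both for $\Phi^+, \Phi^-$ and for the given $\sigma^+,\sigma^-$, and then cross-multiplying (the $\Phi$-equation by $\sigma^-$ and the $\sigma$-equation by $\Phi^-$) and subtracting to kill the $F^+\Phi^-\sigma^-$ term. The same bookkeeping will be performed for \eqref{eq:hcl2-g} with the roles of the superscripts interchanged.

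For the forward step I would then use the identity
\begin{equation*}
\Phi^+_{\alpha}\sigma^- - \sigma^+_{\alpha}\Phi^- = \sigma^+_{\alpha}\sigma^-\bigl(\phi^+_{\alpha} - \phi^-\bigr),
\qquad \phi^\pm = \Phi^\pm/\sigma^\pm,
\end{equation*}
applied to each pair with $\alpha \in \{\varnothing,(-I,I\!I),(I\!I)\}$ for \eqref{eq:hcl1-g}, and analogously for \eqref{eq:hcl2-g}. After substitution the two cross-subtracted equations take exactly the form \eqref{eq:hcl1}--\eqref{eq:hcl2} with the edge fields
\begin{equation*}
a = \frac{A}{\sigma^+\sigma^-}, \qquad
b = \frac{B}{\sigma^+_{-I,I\!I}\sigma^-}, \qquad
s_{I,I\!I} = \frac{S_{I,I\!I}}{\sigma^+_{I\!I}\sigma^-},
\end{equation*}
the last one giving the claimed formula for $s$ after shifting by $-I,-I\!I$. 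The one nontrivial point here is a consistency check: the edge coefficient $1/a$ read off from~\eqref{eq:hcl1-g} must coincide with the one read off from~\eqref{eq:hcl2-g}, and similarly for $b$, $s$; I expect this to hold automatically because the cross-multiplication trick produces the $\sigma^+\sigma^-$ factor symmetrically in both equations, and because $b$ in \eqref{eq:hcl1} and $b_{I,-I\!I}$ in \eqref{eq:hcl2} refer to the same edge evaluated with matching shifts.

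The converse is essentially reverse-engineering the same manipulation. Substituting $\phi^\pm = \rho^\pm\Phi^\pm$ into \eqref{eq:hcl1}, collecting $\Phi^+$-terms on the left and the $\Phi^-$-term on the right, and then multiplying both sides by $\rho^-$ (so as to restore the square factor visible in the stated formula for $F^+$), directly yields \eqref{eq:hcl1-g} with $A=a/(\rho^+\rho^-)$, and analogous expressions for $B$ and $S_{I,I\!I}$; the inhomogeneous term is $F^+=(\rho^-)^2(\tfrac{1}{a}+\tfrac{1}{b}+\tfrac{1}{s_{I,I\!I}})$. Repeating this procedure for \eqref{eq:hcl2} with the factor $\rho^+$ produces the formulas for $F^-$ and reproduces the same $A,B,S$ (again by the edge-sharing consistency). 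I do not foresee any real obstacle beyond bookkeeping of the shifts $\pm I, \pm I\!I$; the main care required is verifying that the relation between $S_{I,I\!I}$ as it appears in \eqref{eq:hcl1-g} and $S_I$ as it appears in \eqref{eq:hcl2-g} is consistent with the single function $S$ on the edges of $\HH$, which is exactly the geometric counterpart of the fact that both equations of the honeycomb system share the same edge weights of the graph.
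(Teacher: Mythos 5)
Your proposal is correct and is exactly the direct gauge computation that the paper leaves implicit for this corollary (stated without proof, in parallel with Corollary~\ref{cor:gen-aff}): the cross-multiplication by $\sigma^\mp$ cancels the $F^\pm$ terms, the identity $\Phi^+_{\alpha}\sigma^- - \sigma^+_{\alpha}\Phi^- = \sigma^+_{\alpha}\sigma^-(\phi^+_{\alpha}-\phi^-)$ produces the affine form, and the consistency of the edge weights read off from the two equations does hold with the stated shifts. The converse substitution with the factors $(\rho^-)^2$ and $(\rho^+)^2$ likewise reproduces the stated $A$, $B$, $S$, $F^\pm$.
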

\subsection{The duality between the linear problems on
the triangular and the honeycomb lattices}
\label{sec:ST}
As we have already remarked, the honeycomb and triangular tilings are 
dual to each other. 
In fact, the duality can be extended to the
level of the linear problems. Both the honeycomb linear system
\eqref{eq:hcl1}-\eqref{eq:hcl2} and the self-adjoint affine 7-point system
\eqref{eq:s-a-a-7-p} are consequences of the following "duality" equations 
between function $\psi$ on the $\TT$ lattice and the pair
$(\phi^+,\phi^-)$ on the honeycomb lattice
\begin{align} \label{eq:ST1}
\psi_{I} - \psi & = \frac{1}{a} (\phi^+ - \phi^-),\\
\label{eq:ST2}
\psi_{I\!I} - \psi & = \frac{1}{b} (\phi^- - \phi^+_{-I, I\!I}),\\
\label{eq:ST3}
\psi_{I\!I} - \psi_{I} & = \frac{1}{s_{I, I\!I}} (\phi^+_{I\!I} - \phi^-).
\end{align}
\begin{figure}
\begin{center}
\includegraphics[width=4cm]{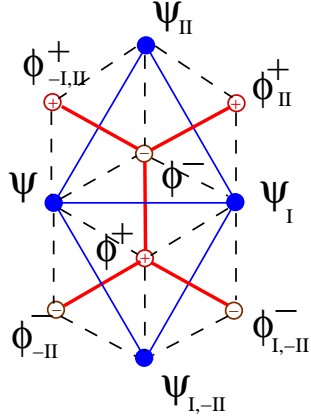}
\end{center}
\caption{The star-triangle duality relations between the triangular and the
honeycomb tilings; the dashed lines represent the edges of the quasiregular
rhombic tiling (the staircase diagonal section)}
\label{fig:S-T-relation}
\end{figure}
Notice that the star-triangle equations \eqref{eq:ST1}-\eqref{eq:ST3} 
are just the Moutard equations on the diagonal staircase section.
Equations \eqref{eq:ST1}-\eqref{eq:ST3} have a point of $\TT_-$ as the star
center, while equations \eqref{eq:ST1}, \eqref{eq:ST2}$_{I,-I\!I}$ and
\eqref{eq:ST3}$_{-I\!I}$ have a point of $\TT_+$ as the star center
(see Figure~\ref{fig:S-T-relation}). 
This is the direct counterpart of the famous
star-triangle relation widely used 
in the analysis of solvable models in statistical mechanics \cite{Baxter}.
As it was shown in \cite{Kashaev} the (nonlinear) star-triangle transformation
in electric networks, interpreted as the local Yang-Baxter equation, leads to
Miwa's discrete BKP equation. The (linear) star-triangle relation 
\eqref{eq:ST1}-\eqref{eq:ST3} gives thus the linear problem for the nonlinear
one (see also Remark \ref{rem:ST-el}).

\begin{Rem}
The above star-triangle relation can be considerd as a particular case of the
relation between the Laplace equations obtained from the discrete
Cauchy--Riemann (discrete Moutard) equation on quad-graphs \cite{BobSur1}, 
where the graph is
the quasiregular rhombic tiling. However, we would like to stress the three
dimensional (not planar) origin of our construction and its geometric meaning.
\end{Rem}

\section{The Laplace transformation of the triangular lattice}
\label{sec:Laplace} 
The Laplace transformations of the self-adjoint 7-point scheme on the triangular
lattice were defined in \cite{Nov} and studied in \cite{NovDyn,DynNov}. The
construction used
in \cite{Nov} consists in the factorization (up to a multiplication operator 
summand) of the second order operator
\eqref{eq:gen-7p-sa} into product of two first order operators. 
In this Section
we will present the geometric meaning of the Laplace transformations. The basic 
idea is as follows: the hyperplanes 
\begin{equation*}
\TT_\ell = \{ (n_1,n_2,n_2)\in\ZZ^3, \; n_1-n_2+n_3 = \ell \}, \qquad \ell \in
\ZZ,
\end{equation*} 
foliate the bipartite $\ZZ^3$ lattice into triangle lattices of a fixed (black
for $\ell$ even, and white for $\ell$ odd) colour. 
\emph{The transition between two
subsequent lattices of the same colour is the Laplace transformation}. 

To develop algebraically this idea we will use formulas 
\eqref{eq:hcl1}-\eqref{eq:hcl2} of the affine honeycomb linear system, which
can be considered as the connection formulas between the
affine self-adjoint 7-point
schemes on $\TT_{\pm 1} = \TT_\pm$ lattices. To simplify our calculations we
will work in the affine gauge induced by the Moutard linear problems (in
\cite{Nov,NovDyn} generic gauge was used). Notice that, in the (forward)
transition from $\TT_-$ to $\TT_+$, we have three
natural possibilities (see Figure~\ref{fig:S-T-relation}) 
\begin{equation*}
\phi^-\to \phi^+, \quad \phi^-\to \phi^+_{I\!I}, \quad
\phi^-\to \phi^+_{-I, I\!I}. 
\end{equation*}
This ambiguity and the corresponding one for the backward
transition was observed on the algebraic level in \cite{NovDyn}.
In this article we choose the transition $\phi^-\to \phi^+$ (the others can be obtained by 
simple
shifts) as the forward Laplace transformation $\mathcal{L}^+$.

The second equation \eqref{eq:hcl2} of the honeycomb linear problem can be
rewritten in the form
\begin{equation} \label{eq:L+}
\phi^+ = \mathcal{L}^+ \phi^- = \frac{1}{r_{I}}\left( b_{I,-I\!I} s_{I} \phi^- +
a  b_{I,-I\!I} \phi^-_{-I\!I} + a s_{I} \phi^-_{I, -I\!I} \right),
\end{equation}
where 
\begin{equation} \label{eq:def-r}
r= a_{-I}s + b_{-I\!I}s + a_{-I} b_{-I\!I}.
\end{equation}
\begin{Rem}
The above transformation is the affine version of the 
Laplace transformation $P_2$ of \cite{NovDyn}. 
\end{Rem}

Inserting such $\phi^+$ into the second equation of 
the linear system \eqref{eq:hcl1}-\eqref{eq:hcl2},
we obtain that the field $\phi^-$ satisfies the affine self-adjoint 7-point
scheme
with the coefficients
\begin{equation} \label{eq:abs-}
a^- = - \frac{a_{I\!I}}{r_{I,I\!I}}, 
\qquad b^- = -\frac{b_{I}}{r_{I,I\!I}},
\qquad s^- = - \frac{s}{r}.
\end{equation}
\begin{Rem}
The coefficients are given up to a common constant multiplier. The above
choice is motivated by the sublattice approach. Indeed, equations
\eqref{eq:phi-pm} and \eqref{eq:subl-notation} imply that, in the sublattice
notation
\begin{equation*} 
a^- = \frac{1}{f^{12}_2}, \qquad 
b^- = \frac{1}{f^{23}_2}, \qquad
s^- = -f^{13}_{-1-3}.
\end{equation*}
We leave it for the Reader to check 
that equations \eqref{eq:abs-} are consequences of the
discrete BKP equations \eqref{eq:nonl-BQL-f}.
\end{Rem}
\begin{Rem} \label{rem:ST-el}
Equations \eqref{eq:abs-} are directly related to the star-triangle
transformations in electrical (or spring) networks, see for example 
\cite{Kashaev}, between the conductivities of conductors (stiffness of the
springs); compare also Section~\ref{sec:ST}.
\end{Rem}
The above equation can be reversed, i.e., the coefficients $a,b,s$ can be
expressed in terms of $a^-, b^-, s^-$ as
\begin{equation} \label{eq:abs-L}
a = - \frac{a^-_{-I\!I}}{q^-_{-I\!I}}, \qquad
b = - \frac{b^-_{-I}}{q^-_{-I}}, \qquad
s = - \frac{s^-}{q^-_{-I,-I\!I}}, \qquad \text{where} \qquad
q^- = a^-b^- + a^-s^-_{I,I\!I} + 
b^-s^-_{I,I\!I}=
\frac{1}{r_{I,I\!I}}.
\end{equation}
This allows to write down the forward Laplace transform of $\phi^-$ in terms of
coefficients \eqref{eq:abs-} of the scheme on $\TT_-$
\begin{equation} \label{eq:Lapl+}
\phi^+ = \mathcal{L}^+ \phi^- = \frac{1}{q^-_{-I\!I}}\left(
b^-_{-I\!I}s^-_{I} \phi^- +
a^-_{-I}  b^-_{-I\!I} \phi^-_{-I\!I} + 
a^-_{-I} s^-_{I} \phi^-_{I, -I\!I} \right).
\end{equation}

Similarly, the backward transformation of $\phi^+$ 
\begin{equation} \label{eq:L-}
\phi^- = \mathcal{L}^{-} \phi^+ = \frac{1}{q}\left( b s_{I, I\!I} \phi^+ +
a  b \phi^+_{I\!I} + a s_{I, I\!I} \phi^+_{-I, I\!I} \right),
\end{equation}
where 
\begin{equation} \label{eq:def-q}
q= a s_{I,I\!I} + b s_{I,I\!I} + a b,
\end{equation}
inserted into the first equation of the linear problem gives 
the affine self-adjoint 7-point scheme
with the coefficients
\begin{equation} \label{eq:tilde-abs}
a^+ = -\frac{a_{I, -I\!I}}{q_{I,-I\!I}}, \qquad 
b^+ = -\frac{b}{q}, \qquad 
s^+ = -\frac{s_{I}}{q_{-I\!I}}. 
\end{equation}
\begin{Rem}
Again, the above coefficients can be obtained by the sublattice approach, i.e., 
\begin{equation*}
a^+ = \frac{1}{f^{12}_1}, \qquad 
b^+ = \frac{1}{f^{23}_1}, \qquad
s^+ = -f^{13}_{-2-3},
\end{equation*}
and equations \eqref{eq:tilde-abs} are consequences of the
nonlinear discrete BKP equations \eqref{eq:nonl-BQL-f}.
\end{Rem}
The inversion formulas 
\begin{equation} \label{eq:abs+L}
a =- \frac{a^+_{-I,I\!I}}{r^+_{I\!I}}, \qquad
b = -\frac{b^+}{r^+_{I\!I}}, \qquad
s = -\frac{s^+_{-I}}{r^+_{-I}}, \qquad \text{where} \qquad
r^+ = a^+_{-I} b^+_{-I\!I} + 
a^+_{-I} s^+ + b^+_{-I\!I} s^+ =
\frac{1}{q_{-I\!I}},
\end{equation}
allow to write down the backward Laplace transform of $\phi^+$ in terms of
coefficients \eqref{eq:tilde-abs} of the scheme on $\TT_+$
\begin{equation} \label{eq:Lapl-}
\phi^- = \mathcal{L}^{-} \phi^+ = \frac{1}{r^+_{I\!I}}\left(
b^+ s^+_{I,I\!I} \phi^+ +
a^+_{-I, I\!I} b^+ \phi^+_{I\!I} + 
a^+_{-I, I\!I} s^+_{I,I\!I} \phi^+_{-I,I\!I} \right).
\end{equation}
We have obtained therefore the sublattice derivation of the Laplace transform,
which we formulate in terms of the proposition below. 
\begin{Prop}
Given a solution $\phi^-$ of the self-adjoint affine 7-point scheme with
coefficients \eqref{eq:abs-}, then its Laplace transform 
$\phi^+=\mathcal{L}^+\phi^-$,
given by formula \eqref{eq:Lapl+}, satisfies a new self-adjoint affine 7-point
scheme on $\TT_+$
whose coefficients can be obtained from those of $\TT^-$ by
inserting expressions \eqref{eq:abs-L} for functions $a,b,s$ into equations 
\eqref{eq:tilde-abs}. The inverse Laplace transform from $\phi^+$ to
$\phi^-=\mathcal{L}^-\phi^+$ is given by equation \eqref{eq:Lapl-}. 
\end{Prop}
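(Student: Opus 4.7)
The cleanest route is the sublattice approach. By Corollary \ref{cor:gen-aff} (applied in the affine gauge) every solution $\phi^-$ of the self-adjoint 7-point scheme on $\TT_-$ with coefficients \eqref{eq:abs-} lifts, at least locally, to a solution $\psi:\ZZ^3\to\VV$ of the three-dimensional Moutard system \eqref{eq:Moutard-ij}; the pair $(\phi^+,\phi^-)=(\psi_1,\psi_2)$ then automatically satisfies the honeycomb system \eqref{eq:hcl1}--\eqref{eq:hcl2}, with coefficients given by \eqref{eq:subl-notation}. In this picture, equation \eqref{eq:L+} is just \eqref{eq:hcl2} solved for the nearest $\TT_+$-neighbour $\phi^+$ of the center $\phi^-$, and the existence of $\phi^+$ and of the resulting 7-point scheme on $\TT_+$ are consequences of the multidimensional consistency of the Moutard system.

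The algebraic verification then proceeds in three steps. First, I would substitute the expression \eqref{eq:L+} for $\phi^+$ into the first honeycomb equation \eqref{eq:hcl1}. After clearing the common denominator $r_I$ this yields an affine 7-point scheme on $\TT_-$ with raw coefficients that are rational in $a,b,s$; a short calculation shows that, up to the common factor $-1/r_{I,I\!I}$, they coincide with \eqref{eq:abs-}. The identities \eqref{eq:abs-L} and the equality $q^-_{-I,-I\!I} r_{I,I\!I} = 1$ used to rewrite $\mathcal{L}^+$ as \eqref{eq:Lapl+} are purely algebraic consequences of the definitions \eqref{eq:abs-} and \eqref{eq:def-r} (they are equivalent to the BKP relations \eqref{eq:nonl-BQL-f} under the identifications \eqref{eq:subl-notation}). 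Second, one repeats exactly the same argument in the opposite direction: solve \eqref{eq:hcl1} for $\phi^-$ to obtain \eqref{eq:L-}, substitute into \eqref{eq:hcl2} to read off the scheme with coefficients \eqref{eq:tilde-abs}, and invert to get \eqref{eq:abs+L} and \eqref{eq:Lapl-}. Third, feeding \eqref{eq:abs-L} into \eqref{eq:tilde-abs} gives the explicit form of the transformed coefficients on $\TT_+$ in terms of $a^-,b^-,s^-$, as asserted.

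The main obstacle is purely bookkeeping: verifying that the substitutions do not introduce spurious terms and that the rational expressions collapse to \eqref{eq:abs-} and \eqref{eq:tilde-abs}. The conceptual simplification which avoids most of this work is that, via the sublattice realization $\phi^\pm = \psi_{1,2}$, both Laplace transforms $\mathcal{L}^\pm$ are inherited from the multidimensional Moutard system, so that $\mathcal{L}^-\mathcal{L}^+ = \mathrm{id}$ follows tautologically from the fact that \eqref{eq:L+} and \eqref{eq:L-} are two rearrangements of the same linear relations \eqref{eq:hcl1}--\eqref{eq:hcl2} satisfied by the pair $(\phi^+,\phi^-)$. This reduces the proof to checking the coefficient identifications, which is a direct computation using \eqref{eq:nonl-BQL-f}.
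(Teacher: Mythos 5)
Your proposal follows essentially the same route as the paper: the Proposition is stated there as a summary of the derivation immediately preceding it, which---exactly as you describe---solves one honeycomb equation for $\phi^+$ (resp.\ $\phi^-$), substitutes it into the other to obtain the 7-point schemes on $\TT_\mp$ with coefficients \eqref{eq:abs-} and \eqref{eq:tilde-abs}, and then inverts these relations (using the nonlinear BKP identities in the sublattice notation) to express the transforms intrinsically as \eqref{eq:Lapl+} and \eqref{eq:Lapl-}. Two small inaccuracies worth noting: the lifting of an arbitrary $\TT_-$ solution to a $\ZZ^3$ Moutard solution is not what Corollary \ref{cor:gen-aff} asserts (that corollary only concerns the gauge relating the generic and affine schemes; the reconstruction of the full Moutard system from the triangular problem is a separate remark in the paper), and the normalization identity should read $q^-\, r_{I,I\!I}=1$ rather than $q^-_{-I,-I\!I}\, r_{I,I\!I}=1$---but neither affects the structure of the argument, since you also supply the direct algebraic verification.
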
 

\begin{Cor}
Interpreting the Laplace transformation as a shift in a new discrete variable
$K\in\ZZ$, from equations \eqref{eq:abs-L} and \eqref{eq:abs+L}
one can write down a nonlinear system relating coefficients of the subsequent  
7-point  equations
\begin{equation} \label{eq:deT-1}
\frac{a^K_{-I\!I}}{q^K_{-I\!I}} = 
\frac{a^{K+1}_{-I,I\!I}}{r^{K+1}_{I\!I}}, \qquad
\frac{b^K_{-I}}{q^K_{-I}} = 
\frac{b^{K+1}}{r^{K+1}_{I\!I}}, \qquad
\frac{s^K}{q^K_{-I,-I\!I}} =
\frac{s^{K+1}_{-I}}{r^{K+1}_{-I}},
\end{equation}
where
\begin{equation} \label{eq:deT-2}
q^K = a^Kb^K + a^K s^K_{I,I\!I} + 
b^K s^K_{I,I\!I}, \qquad
r^K = a^K_{-I}b^K_{-I\!I} + 
a^K_{-I} s^K + b^K_{-I\!I} s^K.
\end{equation}
\end{Cor}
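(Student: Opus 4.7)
The plan is to interpret the Laplace iteration as introducing a new discrete variable $K$ that enumerates successive triangular sublattices of the same colour, and then to exploit the fact that two \emph{different} inversion formulas already supply the intermediate honeycomb weights $(a,b,s)$ in terms of the 7-point scheme coefficients on the two bounding triangular layers. Concretely, I would identify $\TT_K$ with $\TT_-$ and $\TT_{K+1}$ with $\TT_+$, so that $(a^K,b^K,s^K)\equiv(a^-,b^-,s^-)$ and $(a^{K+1},b^{K+1},s^{K+1})\equiv(a^+,b^+,s^+)$; the honeycomb slab sandwiched between these two triangular layers carries the weights $(a,b,s)$ entering the affine honeycomb system \eqref{eq:hcl1}--\eqref{eq:hcl2}, and it is through this common slab that the two neighbouring 7-point schemes are linked.

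With this identification in place, the proof reduces to equating two expressions for the same object. Formulas \eqref{eq:abs-L} express $(a,b,s)$ in terms of the level-$K$ data $(a^K,b^K,s^K)$ via the auxiliary field $q^K$, while formulas \eqref{eq:abs+L} express the \emph{same} triple $(a,b,s)$ in terms of the level-$(K{+}1)$ data $(a^{K+1},b^{K+1},s^{K+1})$ via $r^{K+1}$. Equating these two expressions for each of $a$, $b$, $s$ in turn and cancelling the common minus sign yields directly the three relations of \eqref{eq:deT-1}; the auxiliary definitions \eqref{eq:deT-2} for $q^K$ and $r^K$ are then nothing but the definitions of $q^-$ (from \eqref{eq:abs-L}) and $r^+$ (from \eqref{eq:abs+L}) rewritten with the $K$-superscript.

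The only point requiring care, and what I expect to be the main (if modest) obstacle, is the bookkeeping of the planar shifts $I,I\!I$ relative to the newly introduced $K$-shift. One must check that the $K$-shift, which by definition moves $\TT_-\mapsto\TT_+$ without affecting the in-plane triangular-lattice coordinates, commutes with the shifts in $I$ and $I\!I$, so that no spurious shift in $K$ creeps in when the relabelling $\TT_\pm\mapsto\TT_K,\TT_{K+1}$ is performed and the inverse formulas are superposed. Once this is confirmed, no further computation is needed: the Corollary follows as a purely algebraic consequence of the two inversion formulas already established, and the system \eqref{eq:deT-1}--\eqref{eq:deT-2} acquires the intended interpretation as a fully discrete three-dimensional integrable system governing the iteration of the Laplace transformation.
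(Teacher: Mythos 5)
Your proposal is correct and matches the paper's intended derivation: the Corollary is obtained exactly by observing that \eqref{eq:abs-L} and \eqref{eq:abs+L} give two expressions for the same intermediate coefficients $(a,b,s)$, equating them, cancelling the signs, and relabelling $-\mapsto K$, $+\mapsto K+1$. The paper offers no further argument beyond this, so your route is essentially identical.
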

\begin{Rem}
As we have seen, the discrete (3D) 
BKP equations \eqref{eq:nonl-BQL-f} can be interpreted as the connection
formulas (or the discrete time evolution)
between subsequent self-adjoint problems on the triangular lattices
$\mathbb{T}_\ell$, $\ell\in\mathbb{Z}$. This implies also that one can obtain
the full Mutard system on the $\ZZ^3$ lattice from the self-adjoint linear
problem on the triangular lattice (or the honeycomb lattice). 
\end{Rem}
\begin{Rem}
The novel, up to our knowledge, nonlinear system 
\eqref{eq:deT-1}-\eqref{eq:deT-2} is the sublattice version
of the discrete BKP equations. It describes the transformation rule of the 
coefficients of self-adjoint 7-point schemes under the Laplace transformation.
The corresponding transformation rule in the case of the discrete affine hyperbolic 
4-point scheme is
equivalent to Hirota's discrete Toda system \cite{Hir}, and was derived in 
\cite{DCN} (see also \cite{Dol-Hir}). Its geometric counterpart is given by the
Laplace transforms of two dimensional discrete conjugate nets, 
introduced in pure
geometric way by Sauer \cite{Sauer2}. The Laplace transform of the 
hyperbolic 4-point
operator (in general gauge) was also defined independently, using the 
factorization 
technique, in \cite{Nov}.
\end{Rem}
We would like to close this Section by presenting the decomposition
of the affine
self-adjoint 7-point scheme in the spirit of the papers 
\cite{Nov,DynNov} (formulas
below are the affine versions of the corresponding formulas in there).
Denote by $T_I$ and $T_{I\! I}$ the standard translation operators along
directions $I$ and $I\! I$ of the triangular lattice. Combining both Laplace
transform \eqref{eq:L+} and \eqref{eq:L-} we obtain equations
\begin{align*}
\phi^+ & = \left[ \frac{1}{r_{I}}\left( b_{I,-I\!I} s_{I}  +
a  b_{I,-I\!I} T_{I\!I}^{-1} + a s_{I} T_{I}T_{I\!I}^{-1} \right)\right]
\left[ \frac{1}{q}\left( b s_{I, I\!I} +
a  b T_{I\!I} + a s_{I, I\!I} T_{I} T_{I\!I}^{-1} \right)
\right]\phi^+, \\
\phi^- & = \left[ \frac{1}{q}\left( b s_{I, I\!I} +
a  b T_{I\!I} + a s_{I, I\!I} T_{I} T_{I\!I}^{-1} \right)
\right]
\left[ \frac{1}{r_{I}}\left( b_{I,-I\!I} s_{I}  +
a  b_{I,-I\!I} T_{I\!I}^{-1} + a s_{I} T_{I}T_{I\!I}^{-1} \right)\right]\phi^-,
\end{align*}
i.e., "factorized" linear problems on $\TT_+$ and 
$\TT_-$. Notice that the fields $a$, $b$, $s$ are the
coefficients of the self-adjoint 7-point scheme \eqref{eq:s-a-a-7-p} of the
intermediate black-point lattice $\TT_0$. 

Elaborating the above formulas, one can derive the following
decomposition of the affine Laplace operator of the lattice $\phi^-$ (we leave
the case of $\phi^+$ to the Reader)
\begin{equation} \label{eq:fact-}
\left[ Q^- (Q^-)^* - w^- \right] \phi^- = 0,
\end{equation}
in terms of the operator
\begin{equation*}
Q^- = x^- + y^- T_{I\!I} + z^- T_{I}^{-1} T_{I\!I},
\end{equation*}
and its formal adjoint
\begin{equation*}
(Q^-)^* = x^- + T_{I\!I}^{-1} y^-  + T_{I} T_{I\!I}^{-1} z^- ,
\end{equation*}
where 
\begin{gather*}
(x^-)^2 = \frac{b^-_{-I\!I}s^-_{I}}{a^-_{-I\!I}}, \qquad
(y^-)^2 = \frac{a^-b^-}{s^-_{I,I\!I}}, \qquad
(z^-)^2 = \frac{a^-_{-I}s^-_{I\!I}}{b^-_{-I}},\\
w^- = \frac{q^-_{-I\! I}}{a^-_{-I\! I}} + 
\frac{q^-_{-I}}{b^-_{-I}} + \frac{q^-}{s^-_{I,I\! I}}.
\end{gather*}
\begin{Rem}
In \cite{NovDyn} the relation between the coefficients \eqref{eq:abs-}
of the 7-point scheme
on $\TT_-$ and the corresponding coefficients of $Q^-$
has been brought to the form 
\begin{gather*}
a^- = y^- z^-_{I}, \qquad
b^- = y^- x^-_{I\! I}, \qquad
s^- = x^-_{-I} z^-_{-I\! I}, \\
w^- = (x^-)^2 + (y^-)^2 + (z^-)^2 + a^- + a^-_{-I} + 
b^- + b^-_{-I\! I} + s^-_{I} + s^-_{I\! I}.
\end{gather*}
\end{Rem}
\begin{Rem}
In the special case $x = y = z= 1$ the operator $Q$ and its adjoint $Q^*$ were
viewed in \cite{DynNov} as the analogs of the operators $\partial$ and
$\bar\partial$ in the discrete holomorphic function theory.
In \cite{Kenyon} the Dirac operator on more general
bipartite planar graphs was defined and related to the dimer model of
statistical mechanics.
\end{Rem}
\begin{Rem}
The Laplace transformation equations \eqref{eq:L+} and \eqref{eq:L-} are
nothing but the linear problem \eqref{eq:hcl1}-\eqref{eq:hcl2}
on the honeycomb lattice. Conversely, the factorization
\eqref{eq:fact-} implies the honeycomb linear
problem.
\end{Rem}

\section{The Darboux transformations} 
\label{sec:DT} 
Our goal here is to derive the Darboux transformation of the self-adjoint
7-point scheme \cite{NSD} and of the linear honeycomb problem 
using the sublattice approach. In doing
that we will follow the corresponding ideas of \cite{DGNS}, where the Darboux
transformation of the self-adjoint 5-point scheme \cite{NSD}
was rederived in such a way.

\subsection{The Darboux transformation of the linear problem on the triangular
lattice}
Before presenting the sublattice derivation of the Darboux transformation,
let us first comment on the geometric content of the calculations below. 
The geometric definition~\cite{BQL} of the B-quadrilateral lattice states 
that there exists
a linear relation between $\psi$, $\psi_{-1-2}$,  $\psi_{-1-3}$ and
$\psi_{-2-3}$; i.e., the corresponding points $[\psi]$, $[\psi_{-1-2}]$,  
$[\psi_{-1-3}]$, $[\psi_{-2-3}]$
of the projectivization of $\VV$ are coplanar. Similarly, the geometric 
definition of the corresponding
reduction of the fundamental transformation states that there exists a linear
relation between $\psi_{-1-3}$, $\psi_{-2-3}$, $\tilde\psi_{-3}$ and
$\tilde\psi_{-1-2-3}$. Both corresponding
planes intersect along the line passing through $[\psi_{-1-2}]$ and 
$[\psi_{-1-3}]$, which means that all the six points belong to a three
dimensional subspace. Therefore any five of the six points
are linearly related (see 
Figure~\ref{fig:Darboux-7p}). This allows, in
particular, for the elimination of $\psi_{-1-3}$, which
does not belong to the allowed subset $V(\TT)$
of the $\ZZ^3$ lattice, yelding a linear relation between 
$\psi$, $\psi_{-1-2}$,
$\psi_{-2-3}$, $\tilde\psi_{-3}$ and $\tilde\psi_{-1-2-3}$, the 
first equation of the Darboux transformation. 
\begin{figure}
\begin{center}
\includegraphics[width=10cm]{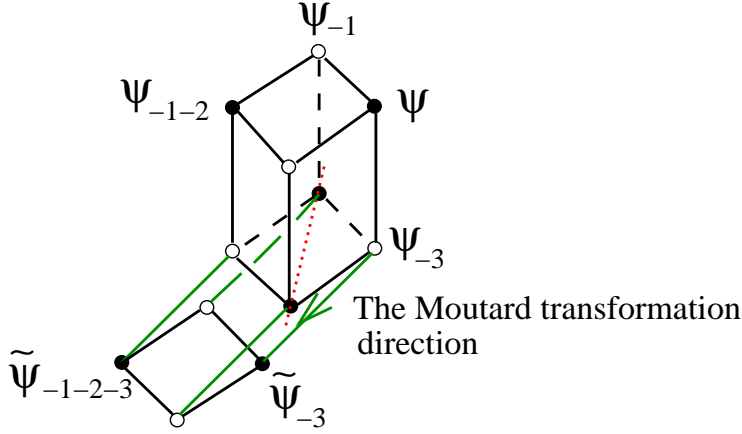}
\end{center}
\caption{Geometry of the sublattice approach to the Darboux transformation of
the self-adjoint 7-point scheme}
\label{fig:Darboux-7p}
\end{figure}

Three discrete Moutard equations \eqref{eq:Moutard-ij}, shifted in $(-i-j)$
directions, lead to the linear equation
\begin{equation} \label{eq:lin-elim1}
\frac{1}{f^{12}_{-1-2}}(\psi-\psi_{-1-2}) -
\frac{1}{f^{13}_{-1-3}}(\psi - \psi_{-1-3}) +
\frac{1}{f^{23}_{-2-3}}(\psi-\psi_{-2-3}) =0.
\end{equation}
If $\theta$ is a scalar
solution of the system \eqref{eq:Moutard-ij}, 
then the Moutard transformations \eqref{eq:Mt}
for $i=1,2$, taken with appropriate shifts, and the discrete Moutard
equation, shifted in the $(-1-2-3))$ direction,
give
\begin{equation}  \label{eq:lin-elim2}
\frac{\theta_{-3}}{\theta_{-2-3}}(\tilde\psi_{-3} -\psi_{-2-3}) -
\frac{\theta_{-3}}{\theta_{-1-3}}(\tilde\psi_{-3} -\psi_{-1-3}) -
\frac{1}{\tilde{f}^{12}_{-1-2-3}}(\tilde\psi_{-3} -\tilde\psi_{-1-2-3})
=0.
\end{equation}
The elimination of $\psi_{-1-3}$ from equations \eqref{eq:lin-elim1} and
\eqref{eq:lin-elim2} leads to
\begin{equation} \label{eq:lin-1-2} 
\tilde\psi_{-3} + \frac{\theta_{-1-3}}{\theta_{-3}}
\frac{1}{\tilde{f}^{12}_{-1-2-3}}(\tilde\psi_{-3} -\tilde\psi_{-1-2-3}) -
\frac{\theta_{-1-3}}{\theta_{-2-3}}(\tilde\psi_{-3} -\psi_{-2-3}) = \\
\psi - \frac{f^{13}_{-1-3}}{f^{12}_{-1-2}}(\psi-\psi_{-1-2}) -
\frac{f^{13}_{-1-3}}{f^{23}_{-2-3}}(\psi-\psi_{-2-3}).
\end{equation}
We multiply both sides by 
$f^{13}_{-1-2-3}\theta_{-2-3}f^{12}_{-1-2}/f^{13}_{-1-3}$,
and we use (a part of) the nonlinear equations \eqref{eq:nonl-BQL-f}:
\begin{equation} \label{eq:ff}
f^{13}_{-1-3}f^{12}_{-1-2-3} = f^{12}_{-1-2}f^{13}_{-1-2-3},
\end{equation}
\begin{equation} \label{eq:BKP-f}
\frac{1}{f^{13}_{-1-2-3}} = f^{23}_{-2-3} + f^{12}_{-1-2} -
\frac{f^{23}_{-2-3} f^{12}_{-1-2}}{f^{13}_{-1-3}},
\end{equation}
to transform equation \eqref{eq:lin-1-2} into
\begin{equation} \label{eq:DT-7-s-1}
\theta_{-3}\tilde\psi_{-3} - \theta_{-1-2-3}\tilde\psi_{-1-2-3} = 
-\frac{\theta_{-2-3}}{f^{23}_{-2-3}}\psi + f^{13}_{-1-2-3} \theta_{-2-3}
\psi_{-1-2} + \left(  \frac{\theta}{f^{23}_{-2-3}} -
f^{13}_{-1-2-3}\theta_{-1-2} \right) \psi_{-2-3}.
\end{equation}

Similar considerations, with equation
\begin{equation}  \label{eq:lin-elim3}
\frac{\theta_{-1}}{\theta_{-1-3}}(\tilde\psi_{-1} -\psi_{-1-3}) -
\frac{\theta_{-1}}{\theta_{-1-2}}(\tilde\psi_{-1} -\psi_{-1-2}) -
\frac{1}{\tilde{f}^{23}_{-1-2-3}}(\tilde\psi_{-1} -\tilde\psi_{-1-2-3})
=0
\end{equation}
instead of equation \eqref{eq:lin-elim2}, lead to  
\begin{equation} \label{eq:DT-7-s-2} 
\theta_{-1}\tilde\psi_{-1} - \theta_{-1-2-3}\tilde\psi_{-1-2-3} = 
\frac{\theta_{-1-2}}{f^{12}_{-1-2}}\psi - f^{13}_{-1-2-3} \theta_{-1-2}
\psi_{-2-3} - \left(  \frac{\theta}{f^{12}_{-1-2}} -
f^{13}_{-1-2-3}\theta_{-2-3} \right) \psi_{-1-2}.
\end{equation}
Introducing the field $\hat\psi = \theta_{-1-2-3}\tilde\psi_{-1-2-3}$ and using 
the
sublattice notation \eqref{eq:subl-notation}, the above system 
\eqref{eq:DT-7-s-1} and \eqref{eq:DT-7-s-2} is rewritten in the form
\begin{align} \label{eq:Darboux-7p1} 
\hat\psi_{I} - \hat\psi \; & = 
- b_{-I\!I}\theta_{-I\!I} \psi - s\theta_{-I\!I} \psi_{-I} +
(b_{-I\!I}\theta + s\theta_{-I}) \psi_{-I\!I}, \\
\label{eq:Darboux-7p2}
\hat\psi_{I\!I} - \hat\psi & = \; \; \; 
a_{-I}\theta_{-I} \psi + s\theta_{-I} \psi_{-II\!} -
(a_{-I}\theta + s\theta_{-I\!I}) \psi_{-I}.
\end{align}
Because $\tilde\psi$ and $\theta^{-1}$ satisfy the same Moutard system, 
then they satisfy the same self-adjoint affine 7-point scheme. By
Corollary~\ref{cor:gen-aff}, the function
$\hat\psi$ satisfies the self-adjoint affine 7-point
scheme whose coefficients can be found from equations \eqref{eq:subl-notation},
with $\tilde{f}^{ij}$ in place of $f^{ij}$, supplemented by the $(-1-2-3)$ shift
and by the change \eqref{eq:coeff-gen-aff}, with respect to the above-mentioned 
gauge, with $\sigma=1/\theta_{-1-2-3}$. This implies that the coefficients read
\begin{align*}
\hat{a} & = \frac{1}{\tilde{f}^{12}_{-1-2-3} \theta_{-3}\theta_{-1-2-3}} =
\frac{1}{f^{12}_{-1-2-3}\theta_{-2-3}\theta_{-1-3}},\\
\hat{b} & = \frac{1}{\tilde{f}^{23}_{-1-2-3} \theta_{-1}\theta_{-1-2-3}} =
\frac{1}{f^{23}_{-1-2-3}\theta_{-1-2}\theta_{-1-3}},\\
\hat{s}_{I I\!I} & = -\tilde{f}^{13}_{-1-3}\frac{1}{\theta_{-1}\theta_{-3}} =
-f^{13}_{-1-3}\frac{1}{\theta\theta_{-1-3}}.
\end{align*}
To rewrite them in the sublattice notation we notice that equation 
\eqref{eq:BKP-f} can be rewritten in the form
\begin{equation} \label{eq:f13-subl}
\frac{1}{f^{13}_{-1-3}} = \frac{r}{s}, 
\qquad 
\end{equation} 
where $r$ is given by \eqref{eq:def-r}.
Then equation \eqref{eq:ff} and its analogue for $f^{23}$ give
\begin{equation} \label{eq:f12-f23-subl}
\frac{1}{f^{12}_{-1-2-3}} = -\frac{a_{-I}}{r}, \qquad 
\frac{1}{f^{23}_{-1-2-3}} = -\frac{b_{-I\!I}}{r}.
\end{equation}
Finally, equation \eqref{eq:lin-elim1}, satisfied by $\theta$, and equation
\eqref{eq:f13-subl} imply
\begin{equation}
\frac{\theta_{-1-3}}{f^{13}_{-1-3}} = 
a_{-I}\theta_{-I} + b_{-I\!I}\theta_{-I\!I} + \frac{\theta a_{-I} b_{-I\!I}}{s},
\end{equation}
which gives
\begin{equation} \label{eq:coeff-DT7}
\hat{a}  = -\frac{a_{-I}}{\theta_{-I\!I} p},\qquad
\hat{b}  = -\frac{b_{-I\!I}}{\theta_{-I} p},\qquad
\hat{s}_{I I\!I}  = -\frac{s}{\theta p},
\end{equation}
where
\begin{equation}\label{eq:p-DT7}
p = s(a_{-I}\theta_{-I} + b_{-I\!I}\theta_{-I\!I}) + a_{-I} b_{-I\!I}\theta .
\end{equation}

Finally we can formulate the theorem, which was given (in the equivalent
non-affine form, see Remark \ref{rem:DT-a-g}  below) in \cite{NSD} directly on 
the 7-point level.
\begin{Th}
Given the general solution $\psi$ of the affine self-adjoint 7-point 
scheme~\eqref{eq:s-a-a-7-p}, and given its particular solution $\theta$, then
equations \eqref{eq:Darboux-7p1} and \eqref{eq:Darboux-7p2} define the general
solution $\hat\psi$ of the affine self-adjoint 7-point scheme with 
coefficients given
by formulas~\eqref{eq:coeff-DT7}-\eqref{eq:p-DT7}.
\end{Th}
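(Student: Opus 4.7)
The plan is to verify the theorem by following the chain of identifications already set up in Section \ref{sec:7-point}: the affine self-adjoint 7-point scheme on $V(\TT)$ is the restriction to the diagonal plane of the three-dimensional discrete Moutard system \eqref{eq:Moutard-ij} on $\ZZ^3$, with coefficients $f^{12},f^{23},f^{13}$ encoding $(a,b,s)$ via \eqref{eq:subl-notation}. First I would argue that any solution $\psi$ of \eqref{eq:s-a-a-7-p} extends (given sufficient initial data on appropriate initial hyperplanes of $\ZZ^3$) to a solution of the full Moutard system, and likewise for the scalar $\theta$. This brings us into the setting where the Moutard transformation \eqref{eq:Mt} and the transformed coefficient formula $\tilde f^{ij}=f^{ij}\theta_i\theta_j/(\theta_{ij}\theta)$ are available.

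Next I would apply the Moutard transformation with parameter $\theta$ and perform the linear elimination already carried out in the body of the section: combining \eqref{eq:lin-elim1} with \eqref{eq:lin-elim2} eliminates the off-lattice value $\psi_{-1-3}$ and yields \eqref{eq:DT-7-s-1}; the symmetric computation using \eqref{eq:lin-elim3} gives \eqref{eq:DT-7-s-2}. Setting $\hat\psi := \theta_{-1-2-3}\tilde\psi_{-1-2-3}$ and translating into sublattice variables via \eqref{eq:subl-notation} produces exactly \eqref{eq:Darboux-7p1}--\eqref{eq:Darboux-7p2}, establishing the transformation law.

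For the coefficients of the new scheme, I would use the fact that $\tilde\psi$ and $\theta^{-1}$ both satisfy the transformed Moutard system, so in particular $1/\theta_{-1-2-3}$ is a scalar solution of the transformed affine 7-point scheme. Invoking Corollary \ref{cor:gen-aff} with the gauge $\sigma=1/\theta_{-1-2-3}$ converts the transformed $\tilde f^{ij}$ (shifted in $-1-2-3$) into the coefficients $\hat a,\hat b,\hat s$ of the scheme satisfied by $\hat\psi$. Rewriting these in sublattice notation uses \eqref{eq:f13-subl} and \eqref{eq:f12-f23-subl}, both consequences of the nonlinear BKP equations \eqref{eq:nonl-BQL-f}, together with the identity for $\theta_{-1-3}/f^{13}_{-1-3}$ obtained by applying \eqref{eq:lin-elim1} to $\theta$. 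The resulting expressions are precisely \eqref{eq:coeff-DT7}--\eqref{eq:p-DT7}. Generality of $\hat\psi$ follows from the corresponding property of the Moutard transformation at the $\ZZ^3$ level: fixing $\theta$, the map $\psi\mapsto\tilde\psi$ preserves the dimension of the solution space, and restriction to the sublattice recovers all independent initial data for the transformed 7-point scheme.

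The main technical obstacle I anticipate is the bookkeeping of shifts when passing from Moutard coefficients $f^{ij}$ to the sublattice fields $(a,b,s)$, particularly verifying that the combination of \eqref{eq:BKP-f} with the $\theta$-version of \eqref{eq:lin-elim1} collapses the apparent mixture of $f^{ij}$ and $\theta$ factors into the compact expression $p=s(a_{-I}\theta_{-I}+b_{-I\!I}\theta_{-I\!I})+a_{-I}b_{-I\!I}\theta$ appearing in \eqref{eq:p-DT7}; this is the delicate computational heart of the argument, and every other step is a straightforward consequence either of the sublattice construction or of Corollary \ref{cor:gen-aff}.
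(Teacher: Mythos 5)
Your proposal is correct and follows essentially the same route as the paper: the Moutard transformation on $\ZZ^3$, elimination of the off-sublattice point $\psi_{-1-3}$ via \eqref{eq:lin-elim1}--\eqref{eq:lin-elim3}, the gauge $\hat\psi=\theta_{-1-2-3}\tilde\psi_{-1-2-3}$ combined with Corollary~\ref{cor:gen-aff} for $\sigma=1/\theta_{-1-2-3}$, and the BKP identities \eqref{eq:f13-subl}--\eqref{eq:f12-f23-subl} to reach \eqref{eq:coeff-DT7}--\eqref{eq:p-DT7}. Your explicit flagging of the extension of $(\psi,\theta)$ from $V(\TT)$ to a full $\ZZ^3$ Moutard solution is a point the paper leaves implicit, and is a welcome clarification rather than a deviation.
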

\begin{Rem} \label{rem:DT-a-g}
Equations \eqref{eq:Darboux-7p1} and \eqref{eq:Darboux-7p2} and the
transformation formulas \eqref{eq:coeff-DT7} 
for the coefficients coincide 
with equations (32) and (34) in \cite{NSD}. The identification 
$\hat\psi=\psi^\prime \theta$, where $\psi^\prime$ is the corresponding
transform in \cite{NSD}, implies that $\psi^\prime$ 
satisfies a self-adjoint 7-point scheme (an arbitrary gauge does not change the
structure of the equation) which will not have, in general, the affine form.  
\end{Rem}
\begin{Rem}
One could consider another (equivalent) choice of the Darboux transformation 
direction, for
example $\psi\to\theta_{123}\tilde\psi_{123}$. This would change slightly
the Darboux transformation
equations \eqref{eq:Darboux-7p1} and \eqref{eq:Darboux-7p2} and formulas
\eqref{eq:coeff-DT7}-\eqref{eq:p-DT7}.
\end{Rem}
\begin{Rem}
Notice that, for $\psi$ defined on $V(\TT)$, the sum of the
$n_1$, $n_2$ and $n_3$
coordinates of $\tilde\psi_{-1-2-3}$ is odd. However, it is convenient not to 
considered $\tilde\psi_{-1-2-3}$ as a field defined in
points of the white lattice. The proper interpretation would be to consider the
Moutard transformation direction as a shift in the fourth dimension of
the $\ZZ^4$ lattice. Then the sum of all coordinates (including $n_4=1$) of
$\tilde\psi_{-1-2-3}$ is even.
\end{Rem}

\subsection{The Darboux transformation of the integrable honeycomb lattice}
In this last part of the paper we first state the theorem on the Darboux
transformation of the honeycomb linear problem, and then we show its
sublattice derivation.
\begin{Th} \label{th:DT-hc}
Given the general solution $(\phi^+,\phi^-)$ of the honeycomb 
linear system \eqref{eq:hcl1}-\eqref{eq:hcl2},
and given its particular solution $(\theta^+,\theta^-)$,
then the solution $(\check\phi^+,\check\phi^-)$ of the system
\begin{align} 
\label{eq:DT-hc1}
\check\phi^+_{I\!I} - \check\phi^- & = \frac{s}{r}\left( 
\theta^-_{-I\!I}\phi^-_{-I} - \theta^-_{-I}\phi^-_{-I\!I} \right),\\
\label{eq:DT-hc2}
\check\phi^+ - \check\phi^- & = \frac{a_{-I}}{r}\left(
\theta^-_{-I, -I\!I}\phi^-_{-I\!I} -\theta^-_{-I\!I}\phi^-_{-I, -I\!I} \right), \\
\label{eq:DT-hc3}
\check\phi^- - \check\phi^+_{-I, I\!I} & = \frac{b_{-I\!I}}{r}\left(
\theta^-_{-I ,-I\!I}\phi^-_{-I} -\theta^-_{-I}\phi^-_{-I, -I\!I} \right),
\end{align}
with $r$ given by equation \eqref{eq:def-r}, 
satisfies a new honeycomb linear system with the
coefficients 
\begin{equation} \label{eq:coeff-DT-hc}
\check{a} = -\frac{a_{-I}}{r} \theta^{-}_{-I\!I}\theta^{-}_{-I, -I\!I}, \qquad
\check{b} = -\frac{b_{-I\!I}}{r} \theta^{-}_{-I}\theta^{-}_{-I, -I\!I}, \qquad
\check{s}_{I, I\!I} = -\frac{s}{r} \theta^{-}_{-I}\theta^{-}_{-I\!I}.
\end{equation}
\end{Th}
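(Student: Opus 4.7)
The plan is to parallel the derivation of the Darboux transformation of the self-adjoint 7-point scheme carried out in the previous subsection, using the sublattice approach. I lift the honeycomb data to a solution $\psi:\ZZ^3\to\VV$ of the Moutard system \eqref{eq:Moutard-ij} on the underlying B-quadrilateral lattice, with scalar solution $\theta$, so that $\phi^\pm=\psi_{1,2}$ and $\theta^\pm=\theta_{1,2}$ in the sublattice notation, and I work with the Moutard transform $\tilde\psi$ defined by \eqref{eq:Mt}. The task is then to identify appropriate representatives of $\theta\tilde\psi$ on $V(\TT_\pm)$ as $\check\phi^\pm$ and to show that they satisfy \eqref{eq:DT-hc1}-\eqref{eq:DT-hc3} with the new honeycomb coefficients \eqref{eq:coeff-DT-hc}.

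First I would choose the definition of $\check\phi^\pm$. Guided by the prescription $\hat\psi=\theta_{-1-2-3}\tilde\psi_{-1-2-3}$ used for the 7-point scheme and by the fact that only $\theta^-$ appears on the right-hand sides of \eqref{eq:DT-hc1}-\eqref{eq:DT-hc3}, the natural choice is to set $\check\phi^\pm$ equal to $\theta\tilde\psi$ evaluated at shifts which land on $V(\TT_\pm)$. The resulting definitions fix the $\theta^-$ dependence through the Moutard transformation \eqref{eq:Mt} applied in the direction complementary to $1$ or $2$.

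Next I would derive the three equations \eqref{eq:DT-hc1}-\eqref{eq:DT-hc3} by the same elimination procedure as in the 7-point case: for each of the three edges of the hexagonal star around a white point, I would take a suitably shifted Moutard transformation equation \eqref{eq:Mt} together with a shifted Moutard equation \eqref{eq:Moutard-ij} and eliminate the intermediate black-lattice value of $\psi$, which does not belong to $V(\HH)$. The three relations correspond to the three edges of the hexagon, in direct analogy with the three star-triangle relations \eqref{eq:ST1}-\eqref{eq:ST3}. The algebraic simplification uses exactly the identities \eqref{eq:ff}-\eqref{eq:BKP-f} and \eqref{eq:f13-subl}-\eqref{eq:f12-f23-subl} already established in the derivation of Theorem 4.1, expressing the quantities $f^{ij}_{-i-j-k}$ in terms of $a$, $b$, $s$ and $r$.

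Finally I would verify the coefficient formulas \eqref{eq:coeff-DT-hc}. Because $\tilde\psi$ satisfies the transformed Moutard system with $\tilde f^{ij}=f^{ij}\theta_i\theta_j/(\theta_{ij}\theta)$, and because $\theta^{-1}$ satisfies the same Moutard system as $\tilde\psi$, Corollary \ref{cor:gen-aff-hc} implies that the pair $(\check\phi^+,\check\phi^-)$ satisfies an affine honeycomb linear system, with coefficients coming from $\tilde f^{ij}$ via \eqref{eq:subl-notation} after the $(-1-2-3)$ shift and the $\theta^{-1}$ gauge. Using the sublattice identities cited above and the formula $\theta_{-1-3}/f^{13}_{-1-3}=a_{-I}\theta_{-I}+b_{-I\!I}\theta_{-I\!I}+\theta a_{-I}b_{-I\!I}/s$ derived in the 7-point proof, these coefficients collapse to exactly \eqref{eq:coeff-DT-hc}, with the denominator $r$ of \eqref{eq:def-r} naturally emerging from the combination $p$ evaluated at $\theta$. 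The main obstacle is the bookkeeping of shifts, signs and gauge factors needed to match the $\ZZ^3$ Moutard data to the two-colored white sublattice; however, since every nontrivial algebraic identity required has already been established in the 7-point case, the computation proceeds mutatis mutandis.
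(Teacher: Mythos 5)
Your proposal follows essentially the same route as the paper's own proof: lifting to the $\ZZ^3$ Moutard system, defining $\check\phi^\pm$ as $\theta\tilde\psi$ at suitable shifts (the paper takes $\check\phi^+=\theta_{-2-3}\tilde\psi_{-2-3}$, $\check\phi^-=\theta_{-1-3}\tilde\psi_{-1-3}$), eliminating the black-lattice points via the Moutard and Moutard-transformation equations together with \eqref{eq:f13-subl}--\eqref{eq:f12-f23-subl}, and obtaining the new coefficients from Corollary~\ref{cor:gen-aff-hc} and the fact that $\theta^{-1}$ solves the transformed Moutard system. The only minor imprecision is your final remark about the denominator $r$ emerging from $p$: in the honeycomb case the coefficient computation needs only \eqref{eq:f13-subl} and \eqref{eq:f12-f23-subl}, and the quantity $p$ of the 7-point derivation plays no role.
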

\begin{proof}
The proof can be done by direct verification. However, in the spirit of the
paper, we present its sublattice derivation.
\begin{figure}
\begin{center}
\includegraphics[width=4cm]{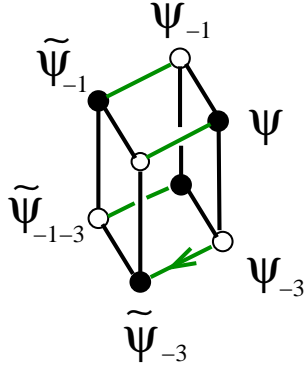}
\end{center}
\caption{Geometry of the sublattice approach to the Darboux transformation of 
the linear problem on the honeycomb grid}
\label{fig:Darboux-hc}
\end{figure}
Consider the wave function $\psi$ defined in the
vertices of the hexahedron, like in
Figure \ref{fig:Darboux-hc}.
The basic geometric property of the B-reduced fundamental transformation (see
Remark~\ref{rem:B-ft}) implies
the possibility of eliminating the "black points of the hexahedron" 
from the discrete Moutard equation
\begin{equation}
\psi - \psi_{-1-3} = f^{13}_{-1-3}(\psi_{-3} - \psi_{-1}),
\end{equation}
and the two Moutard transformation equations
\begin{align}
\tilde\psi - \psi_{-1} & = \frac{\theta_{-1}}{\theta}(\tilde\psi_{-1} - \psi) ,
\\
\tilde\psi_{-3} - \psi_{-1-3} & = \frac{\theta_{-1-3}}{\theta_{-3}}
(\tilde\psi_{-1-3} - \psi_{-3}),
\end{align}
which gives
\begin{equation} \label{eq:DT-hc1-s}
\theta\tilde\psi - \theta_{-1-3}\tilde\psi_{-1-3} = f^{13}_{-1-3} 
(\theta_{-3}\psi_{-1} - \theta_{-1}\psi_{-3}).
\end{equation}
Define $\check\phi^+ =\theta_{-2-3}\tilde\psi_{-2-3}$, 
$\check\phi^- = \theta_{-1-3}\tilde\psi_{-1-3}$
in order to respect the duality transformation, and notice that
$\theta^+ = \theta_{1}$, $\theta^- = \theta_{2}$ are solutions of 
the honeycomb linear
problem \eqref{eq:hcl1}-\eqref{eq:hcl2}. In such notation, and taking into 
consideration equation \eqref{eq:f13-subl}, we can rewrite equation 
\eqref{eq:DT-hc1-s} in the form of the first equation \eqref{eq:DT-hc1} of the
Darboux transformation system.
A similar procedure gives equations
\begin{align*}
\theta_{-2-3}\tilde\psi_{-2-3} - \theta_{-1-3}\tilde\psi_{-1-3} & = 
\frac{1}{f^{12}_{-1-2-3}}(\theta_{-3}\psi_{-1-2-3}-\theta_{-1-2-3}\psi_{-3}), \\
\theta_{-1-3}\tilde\psi_{-1-3} - \theta_{-1-2}\tilde\psi_{-1-2} & = 
\frac{1}{f^{23}_{-1-2-3}}(\theta_{-1}\psi_{-1-2-3}-\theta_{-1-2-3}\psi_{-1}),
\end{align*}
which, using equations  \eqref{eq:f12-f23-subl}, give
the remaining part \eqref{eq:DT-hc2} and \eqref{eq:DT-hc3} of the Darboux
tranformation system.

Because $\tilde\psi$ and $\theta^{-1}$ satisfy the same Moutard system, therefore
$(\tilde\psi_1,\tilde\psi_2)$ and $(\theta^{-1}_1,\theta^{-1}_2)$ satisfy 
the same honeycomb affine linear system. By
Corollary~\ref{cor:gen-aff-hc}, the functions
$(\check\phi^+,\check\phi^-)$ satisfy the honeycomb affine linear system,
the coefficients of which can be found from equations \eqref{eq:subl-notation}
with $\tilde{f}^{ij}$ in place of $f^{ij}$, supplemented by the $(-1-2-3)$ shift
and by the change \eqref{eq:coeff-gen-aff-hc} with respect to the above-mentioned 
gauge with $(\sigma^+,\sigma^-)=(\theta^{-1}_{-2-3},\theta^{-1}_{-1-3})$. 
This implies that the transformed coefficients read
\begin{align*}
\check{a} & = \frac{ \theta_{-2-3}\theta_{-1-3}}{\tilde{f}^{12}_{-1-2-3}} =
\frac{\theta_{-3}\theta_{-1-2-3}}{f^{12}_{-1-2-3}},\\
\check{b} & = \frac{\theta_{-1-2}\theta_{-1-3}}{\tilde{f}^{23}_{-1-2-3} } =
\frac{\theta_{-1}\theta_{-1-2-3}}{f^{23}_{-1-2-3}},\\
\check{s}_{I, I\!I} & = -\tilde{f}^{13}_{-1-3}\theta\theta_{-1-3} =
-f^{13}_{-1-3} \theta_{-1}\theta_{-3},
\end{align*}
which, using equations \eqref{eq:f13-subl}
and \eqref{eq:f12-f23-subl}, take the form as in equation \eqref{eq:coeff-DT-hc}.
\end{proof}

\section{Concluding remarks}
We have derived the integrability properties of the self-adjoint linear problems
on the triangular and honeycomb lattices from the well known theory of the 3D
system of discrete Moutard equations. This result can be the starting point for
several research programs. We would like to point out some of them.

In soliton theory, a linear problem admitting the Darboux type transformations
can be used to construct evolution equations (with the time being discrete or
continuous variable) compatible with the linear system; the standard example is
the KdV hierarchy. As we have already remarked, the Laplace transformation
equations \eqref{eq:deT-1}-\eqref{eq:deT-2} provide such a nonlinear system.
More complicated evolutions of the triangular and honeycomb lattices are 
being studied, see \cite{SDN7p}.

Given integrable reduction of the 3D Moutard system (for example, 
the generalized
isothermic lattice \cite{Dol-isoth}), by restricting it to the quasiregular
rhombic tiling one can obtain the corresponding 
integrable system on the triangular or on the
honeycomb lattice. In particular, to obtain the 
constant angles circle patterns \cite{BobAg} one starts from further reduction
of the generalized isothermic lattice. We remark, that such a process has
already been built in the "consistency around the cube" property 
\cite{NijhoffCC,AdlBobSur}.

Finally, using the "cut and project" method \cite{Senechal}
in conjunction with the sublattice approach \cite{DGNS} one can derive linear
problems (see also Remark \ref{rem:Paolo-gen}) and integrable systems
on more complicated graphs.

\section*{Acknowledgments}
This work was supported by the cultural and
scientific agreement between the University of Roma ``La Sapienza'' 
and the University of Warmia and Mazury in Olsztyn. A.~D.~was also supported 
by
the Polish Ministry of Science and Higher Education grant 1~P03B~017~28. 
M.~N.~was supported by European Community under the Marie Curie Intra-European
Fellowship contract no. MEIF-CT-2005-011228.


\begin{thebibliography}{cc}

\bibitem{Adl}
V. E. Adler, \emph{Legendre transforms on a triangular lattice}, Funct. Anal.
Appl. \textbf{34} (2000), 1--9.

\bibitem{AdlSu}
V. E. Adler and Y. B. Suris, 
{\it $Q_4$: integrable master equation related to an elliptic curve}, 
Intern. Math. Research Notices \textbf{47} (2004) 2523--2553.

\bibitem{AdlVes}
V.E. Adler and A.P. Veselov, \emph{Cauchy problem for integrable discrete
equations on quad-graphs},  Acta Applicandae Mathematicae \textbf{84} (2004),
237-262.

\bibitem{AdlBobSur}
V. A. Adler, A. I. Bobenko, Yu. B. Suris, \emph{Classification of integrable
equations on quad-graphs. The consistency approach}, Commun. Math. Phys.
\textbf{233} (2003), 513--543.

\bibitem{BobAg}
S. I. Agafonov and A. I. Bobenko, \emph{Hexagonal circle paterns with constant
intersection angles and discrete Painlev\'{e} and Riccati equations},  
J. Math. Phys. \textbf{44} (2003), 3455-3469. 

\bibitem{Baxter}
R. J. Baxter, \emph{Exactly solved models in statistical mechanics}, Academic
Press, London, 1982.

\bibitem{BobHofSu}
A. I. Bobenko, T. Hoffmann and Yu.B.Suris, 
\emph{Hexagonal circle patterns and integrable systems: 
Patterns with the multi-ratio property and Lax equations 
on the regular triangular lattice},
Intern. Math. Research Notices \textbf{3} (2002), 111--164.

\bibitem{BobHof}
A. I. Bobenko and T. Hoffmann, \emph{Hexagonal circle patterns and integrable
systems. Patterns with constant angles}, Duke Math. J. (2003) \textbf{116},
525--566. 

\bibitem{BobSur1}
A.~I. Bobenko and Yu.~Suris, \emph{Integrable systems on quad-graphs}, Internat.
Math. Res. Notices \textbf{11} (2002), 573--611.

\bibitem{BobMerSur}
A. I. Bobenko, Ch. Mercat, Yu. B. Suris, \emph{Linear and nonlinear theories of
discrete analytic functions. Integrable structure and isomonodromic Green's
function}, J. Reine Angew. Math. \textbf{583} (2005)
117--161.



\bibitem{DJM-dBKP}
E. Date, M. Jimbo and T. Miwa, \emph{Method for generating discrete soliton
equations. V}, J. Phys. Soc. Japan \textbf{52} (1983), 766--771.

\bibitem{DCN}
A. Doliwa, {\it Geometric discretisation of the Toda system},
Phys. Lett. A {\bf 234} (1997), 187--192.

\bibitem{Dol-Hir}
A. Doliwa, \emph{Lattice geometry of the {H}irota equation}, [in:] SIDE III --
  Symmetries and Integrability of Difference Equations, D.~Levi and
  O.~Ragnisco (eds.), pp.~93--100, CMR Proceedings and Lecture Notes, 
  vol.~25, AMS,
  Providence, 2000.


\bibitem{BQL}
A. Doliwa, \emph{The B-quadrilateral lattice, its transformations and the
algebro-geometric construction}, J. Geom. Phys. \textbf{57} (2007), 1171--1192. 

\bibitem{Dol-isoth}
A. Doliwa, \emph{Generalized isothermic lattices}, J. Phys. A: Math. and
Theor. (to appear), \texttt{(arXiv:nlin.SI/0611018)}. 

\bibitem{MQL}
A.~Doliwa and P.~M. Santini, \emph{Multidimensional quadrilateral lattices 
are  integrable}, Phys. Lett. A \textbf{233} (1997), 365--372.

\bibitem{DS-sym}
A.~Doliwa and P.~M. Santini, \emph{The symmetric, {D}-invariant and {E}gorov 
reductions of the
  quadrilateral lattice}, J. Geom. Phys. \textbf{36} (2000), 60--102.


\bibitem{DS-EMP}
A.~Doliwa and P.~M.~Santini, \emph{Integrable systems and discrete 
geometry}, [in:] Encyclopedia of Mathematical Physics, J. P. Fran\c{c}ois, 
G. Naber and T. S. Tsun (eds.), Elsevier, 2006, Vol. III, pp. 78-87.

\bibitem{DMS}
A.~Doliwa, S.~V. Manakov and P.~M. Santini, 
\emph{$\bar\partial$-reductions of
  the multidimensional quadrilateral lattice: the multidimensional 
  circular
  lattice}, Comm. Math. Phys. \textbf{196} (1998), 1--18.

\bibitem{TQL}
A. Doliwa, P. M. Santini and M. Ma{\~n}as,
\emph{Transformations of Quadrilateral Lattices}, J. Math. Phys. \textbf{41}
(2000)  944--990.

\bibitem{DGNS}
A.~Doliwa, P.~Grinevich, M.~Nieszporski, and P.~M. Santini, \emph{Integrable 
lattices and their sublattices:
from the discrete Moutard (discrete Cauchy--Riemann) 4-point equation to the 
self-adjoint 5-point scheme}, J. Math. Phys. 
\textbf{48} (2007), art. no. 013513 JAN 2007,
doi: 10.1063/1.2406056.

\bibitem{DynNov}
I. A. Dynnikov and S. P. Novikov, \emph{Geometry of the triangle equations on
two-manifolds}, Mosc. Math. J. \textbf{3} (2003), 419--438.

\bibitem{Hir}
R. Hirota, \emph{Discrete analogue of a generalized {Toda} equation}, J. Phys.
  Soc. Jpn. \textbf{50} (1981), 3785--3791.

\bibitem{Kashaev} R. M. Kashaev, \emph{On discrete three-dimensional equation
associated with the local Yang--Baxter relation}, Lett. Math. Phys. \textbf{38}
(1996), 389--397.

\bibitem{Kenyon} R. Kenyon, \emph{The Laplacian and Dirac operators on 
critical planar graphs}, Invent. Math. 
{\bf 150} (2002), 409--439.  


\bibitem{Mercat}
Ch. Mercat, \emph{Discrete Riemann surfaces and the Ising model}, Commun. Math.
Phys. \textbf{218} (2001), 177--216.
\bibitem{Miwa} T.~Miwa, \emph{On Hirota's difference equations}, 
  Proc. Japan Acad. \textbf{58} (1982) 9--12.


\bibitem{NSD} M. Nieszporski, P. M. Santini and A. Doliwa, {\it
Darboux transformations for 5-point and 7-point self-adjoint schemes and
an integrable discretization of the 2D Schr\"{o}dinger operator}, 
Phys. Lett. A  \textbf{323} (2004), 241--250.

\bibitem{Nijhoff}
F. W. Nijhoff, \emph{Discrete Painlev\'{e} equations and symmetry reductions on
the lattice}, [in:] Discrete integrable geometry and
  physics, (A.~Bobenko and R.~Seiler, eds.), pp. 209--234,
  Clarendon Press, Oxford, 1999.

\bibitem{NijhoffCC}
F. W. Nijhoff, \emph{Lax pair for the Adler (lattice Krichever--Novikov)
system}, Phys. Lett. A \textbf{297} (2002), 49--58.
  
\bibitem{NimmoSchief}
J.~J.~C. Nimmo and W.~K. Schief, \emph{Superposition principles associated with
  the {Moutard} transformation. {An} integrable discretisation of a
  (2+1)-dimensional sine-{Gordon} system}, Proc. R. Soc. London A \textbf{453}
  (1997), 255--279.


\bibitem{Nov}
S. P. Novikov, \emph{Difference analogs of Laplace transformations and
two-dimensional Toda lattices}, Appendix I in \cite{NovVes}.

\bibitem{NovVes}
S. P. Novikov and A. P. Veselov, \emph{Exactly solvable two-dimensional
Schr\"{o}dinger operators and Laplace transformations}, [in:] \emph{Solitons,
geometry, and topology: on a crossroad}, V. M. Buchstaber and S. P. Novikov
(eds.) AMS Transl. \textbf{179} (1997), 109--132.

\bibitem{NovDyn}
S. P. Novikov and I. A. Dynnikov, \emph{Discrete spectral symmetries of low
dimensional differential operators and difference operators on regular lattices
and two-dimensional manifolds}, Russian Math. Surveys \textbf{52} (1997),
1057--1116.

\bibitem{SDN7p}
P.~M.~Santini, A.~Doliwa and M.~Nieszporski, \emph{Integrable dynamics of
Toda-type on the square and triangular lattices}.

\bibitem{Sauer2}
R.~Sauer, \emph{Projective Liniengeometrie}, de Gruyter, Berlin--Leipzig, 1937.


\bibitem{Sauer}
R.~Sauer, \emph{Differenzengeometrie}, Springer, Berlin, 1970.


\bibitem{Schiff}
J. Schiff, \emph{HexaKdV}, \texttt{arXiv:nlin.SI/0209041}.

\bibitem{Senechal}
M. Senechal, \emph{Quasicrystals and geometry}, Cambridge University Press,
1995.

\end{thebibliography}
\end{document}